\documentclass[authoryear,5p,times,twocolumn]{elsarticle}
\usepackage[T1]{fontenc}
\usepackage[latin9]{inputenc}
\usepackage{amsthm}
\usepackage{amsmath}
\usepackage{amssymb}
\usepackage{thmtools, thm-restate}
\usepackage[hidelinks]{hyperref}
\usepackage{graphicx}
\usepackage{placeins}
\usepackage{epstopdf}
\usepackage{units}
\usepackage{mathrsfs}
\usepackage{algorithm}
\usepackage{etoolbox}
\AtBeginEnvironment{algorithm}{\small}
\usepackage{relsize}
\usepackage[scr=rsfs,scrscaled=1]{mathalfa}
\usepackage{xcolor}
\usepackage[capitalise,nameinlink,noabbrev]{cleveref}
\allowdisplaybreaks[3]

\usepackage[compact]{titlesec}
\titleformat{\paragraph}[runin]{\normalfont\bfseries}{\theparagraph}{1em}{}

\crefname{equation}{}{}
\crefrangelabelformat{equation}{(#3#1#4)-(#5#2#6)}
\crefname{assumption}{Assumption}{Assumptions}

\newtheorem{lemma}{Lemma}
\newtheorem{proposition}{Proposition}
\newtheorem{corollary}{Corollary}
\newtheorem{remark}{Remark}

\DeclareMathAlphabet\mathbfcal{OMS}{cmsy}{b}{n}

\def\T{\mathsf{T}}
\newcommand*{\E}[1]{\mathsf{E} \left\{ #1 \right\}}

\def \0{\mathbf{0}}
\DeclareMathOperator{\col}{col}
\DeclareMathOperator{\blkdiag}{blkdiag}

\DeclareMathOperator{\trace}{tr}

\newtheorem{theorem}{Theorem}

\makeatletter

\theoremstyle{plain}

\theoremstyle{remark}

\newtheorem{assumption}{Assumption}
\newtheorem{definition}{Definition}
\makeatother

\providecommand{\remarkname}{Remark}
\providecommand{\theoremname}{Theorem}
\graphicspath{{./}{Figures/}}

\let\cite\citep
\begin{document}

\begin{frontmatter}
\title{Multi-Rate Distributed Unscented Kalman Filtering Under Collective Detectability}
\author[bv,ak]{Mohammad Ali Abooshahab\corref{cor1}\fnref{fn1}}
\ead{mohammad.abooshahab@bouvet.no}
\author[ntnu]{Morten Hovd}
\ead{morten.hovd@ntnu.no}
\cortext[cor1]{Corresponding author.}
\fntext[fn1]{This work was performed while M.\,A. Abooshahab was with the Department of Engineering Cybernetics, NTNU, Trondheim, Norway.}
\affiliation[bv]{organization={Bouvet}, country={Norway}}
\affiliation[ak]{organization={Arta Kyb}, city={Stavanger}, country={Norway}}
\affiliation[ntnu]{organization={Department of Engineering Cybernetics, Norwegian University of Science and Technology (NTNU)}, city={Trondheim}, country={Norway}}
\begin{abstract}
We develop a multi-rate distributed unscented Kalman filter for nonlinear sensor networks without a fusion center. Local continuous--discrete UKFs run on a common base grid, update at their own sampling instants, and exchange estimates within one-hop neighbourhoods. Under a windowed collective-detectability condition, diffusion of information matrices gives uniformly bounded quotient covariances although no individual node need be observable. When the collectively invisible subspace is trivial and the statistical-linearization discrepancies satisfy an explicit coherence bound, the information-weighted fused mean yields exponentially bounded mean-square errors without a contraction--mixing condition. This contrasts with the arithmetic diffusion mean, which requires one. A distributed $H_{\infty}$ variant uses a fixed attenuation penalty. A finite-prefix feasibility check and the one-hop detectability Gramian yield a uniform regularized-information margin; an explicit local-contraction--mixing condition then yields mean-square error boundedness. The developed methods are tested on a stochastic nonlinear benchmark and on a three-inertia network in which every node misses at least one mode. Information-mean diffusion is more accurate than five-round covariance-averaging consensus at one fifth of its communication. It remains stable under $3\!:\!2$ multi-rate sampling, where the covariance-averaging consensus and the diffusion mean both diverge, and it keeps the quotient covariances bounded.
\end{abstract}
\begin{keyword}
Kalman filtering \sep distributed filtering \sep sensor networks \sep sensor fusion \sep multi-rate systems \sep dynamic state estimation
\end{keyword}
\end{frontmatter}
\section{Introduction}

Cheap sensors with built-in communication have made networked monitoring practical for large-scale systems~\cite{CooperativeControl,SayedBook}, and with it the need for Kalman-type filters that run on the sensor network itself, without a fusion center~\cite{cattivelli2010diffusion,RezaSN,CooperativeControl}. However, initial approaches to Kalman filtering over sensor networks either used a central processing unit to fuse the observations or relied on all-to-all communication so that every sensor could replicate that unit's operations~\cite{FisrtDisCont,FirstDisKalman}. A single processing unit is a single point of failure and demands elaborate communication and timing protocols, while all-to-all exchange floods the network with traffic; neither scales to networks the size of a power grid.

Truly distributed Kalman filtering arrived with diffusion~\cite{diffusion} and consensus~\cite{average-consensus-journal,ConsFilter,average-consensus} frameworks for information fusion over sensor networks~\cite{RezaSN,CooperativeControl,SayedBook,cattivelli2010diffusion,ControlLetter,CCTA}. In these schemes each node fuses its local estimate -- and, in the information variants, its information matrix -- with its neighbours', by diffusion averaging or by consensus iterations, and approaches the performance of a centralized filter while never communicating beyond one hop. The linear theory has since matured considerably. It now includes consensus on information matrices with stability guarantees under collective observability \cite{BattistelliChisci2014,BattistelliChisci2015TAC}, covariance-intersection fusion for unknown inter-node correlations \cite{HuXieZhang2012TSP,wei2018,SebastianECODKF2024}, consistent estimation under global observability \cite{HeXueFang2018Automatica}, and distributed observers and steady-state implementations \cite{MitraSundaram2018TAC,Yan2023TAC,YangQianDuanSun2025}; see \cite{Rego2019ARC} for a survey. The frontier now includes joint state--unknown-input estimation without global topology information \cite{LuoSuZeng2026} and, in power networks, operation with only a partially known grid model \cite{kalman2019a}. All of these results assume linear dynamics and observations. Nonlinear practice is served by extended/unscented variants~\cite{UnscentedDistKalman,AliExtended,WeightAveCons,LiResilientTCST2023}, which in essence swap the local linear filter for an EKF or UKF. Boundedness results for such filters do exist. For the consensus \emph{extended} Kalman filter, stability under collective observability was established in \cite{BattistelliChisci2016EKF}, and related state-constrained distributed Kalman filters have also been studied \cite{LvDuanDuan2021}. For unscented variants, mean-square boundedness is available for the weighted-average consensus UKF \cite{WeightAveCons,wei2018} and, more recently, for a diffusion UKF with covariance intersection under intermittent measurements \cite{ChenWang2021Automatica}. However, the available analyses of these unscented variants \emph{assume a priori} the existence of uniformly bounded error covariances and of bounded compensation matrices for the linearization error (the $\Gamma,Z$ of \cref{sec:linearization}). Bounded compensations are assumed here as well (\cref{ass:bounds}); the covariance bounds, in contrast, are derived from a verifiable system-level condition. Moreover, all of the above address single-rate networks whose nodes are not severely locally unobservable. What is still lacking is an analysis that \emph{derives} covariance boundedness for statistically linearized filters from a collective detectability condition, under multi-rate sampling, and that clarifies which fusion mechanisms the guarantees do and do not cover. Providing one is a central aim of this paper.

A prominent motivation for this development is the monitoring of power networks. There, synchrophasor and other sampled-value measurements naturally arrive on different time scales, so heterogeneous -- and not necessarily integer-ratio -- sampling is operationally relevant. Kalman-type dynamic state estimation has become important for monitoring, control, and protection \cite{ZhaoTaskForce2021}, even though no individual sensor comes close to rendering the system observable. The present paper develops the estimation theory in its general form; the power-system instantiation of these methods, including transmission- and distribution-grid case studies, is reported in a companion application paper in preparation.

In this work, the problem of distributed state estimation for nonlinear networked systems under heterogeneous sampling is revisited through a merger of concepts from classical distributed filtering, nonlinear state estimation, and multi-rate sensor fusion. The contributions are fourfold. First, a multi-rate distributed filtering architecture is derived in which local continuous--discrete (comminuted, CUKF) unscented Kalman filters run on a common base time grid, perform measurement updates at each node's own sampling instants, and diffuse their results within one-hop neighbourhoods (\cref{sec:dcukf}). Second, the notion of collective uniform detectability is formalized, and three complementary strategies are developed for the setting in which individual nodes are locally unobservable: estimation on locally observable subspaces, diffusion of information matrices, and a fixed-attenuation $H_{\infty}$ information update (\cref{sec:strategies}). Third, a performance analysis is established. It provides uniform bounds on the quotient error covariances for either fused-mean rule and, when the collectively invisible subspace is trivial, exponential mean-square error bounds: without a contraction--mixing condition for the information-weighted mean, and under such a condition for the diffusion-type mean, in both cases subject to an explicit statistical-linearization coherence bound. The analysis also states explicitly which fusion mechanisms the guarantees cover, and which provably admit no such guarantee (\cref{sec:analysis}). Fourth, the framework is validated on two canonical benchmarks: the stochastic nonlinear testbed of the weighted-average consensus UKF \cite{WeightAveCons} and a three-inertia network with individually undetectable nodes \cite{KimLeeShim2020,LuoSuZeng2026}. The studies include non-integer-ratio multi-rate stress tests, per-step communication accounting, and covariance-boundedness diagnostics (\cref{sec:sim}). The studies corroborate that mean-only mixing genuinely requires a condition beyond collective detectability: with individually undetectable nodes, state-only fusion is unstable or inferior to no fusion at all, whereas information-matrix diffusion remains uniformly bounded and accurate.

The remainder of the paper is organized as follows. \cref{Sec:Problem} formulates the network and filtering problem. \cref{sec:dcukf} derives the multi-rate diffusion CUKF and its statistical linearization. \cref{sec:strategies} presents the detectability notions and the three strategies for locally unobservable nodes, whose guarantees are established in \cref{sec:analysis} and compared in \cref{sec:compare}. \cref{sec:sim} reports the numerical studies and \cref{sec:conclusion} concludes. The CUKF stages and all proofs are collected in Appendices A--B.

\noindent\textit{\textbf{Mathematical Notation}}: Scalars, column vectors, and matrices are lowercase, bold lowercase, and bold uppercase; $\mathbf{I}$ is the identity matrix of appropriate dimension. The transpose and statistical expectation operators are denoted by $(\cdot)^{\T}$ and $\E{\cdot}$. For symmetric matrices, $\mathbf{A}\preceq\mathbf{B}$ denotes that $\mathbf{B}-\mathbf{A}$ is positive semi-definite. The operators $\col_{l}\{\cdot\}$ and $\blkdiag_{l}\{\cdot\}$ stack their arguments columnwise and block-diagonally, and $\|\cdot\|$ is the spectral norm. The state dimension is $n_{\mathbf{x}}$ and the measurement dimension at node $l$ is $p_{l}$ (co-located sensors sharing the same sampling instants are stacked into one measurement vector).

\section{Problem Formulation}\label{Sec:Problem}
Sensors and communication links form the graph $\mathcal G=(\mathcal N,\mathcal E)$; $\mathcal N_l$ contains node $l$ and its communication neighbours. The nonlinear process and sensor-$l$ samples are
\begin{align}
\dot{\mathbf x}(t)&=\mathscr F(\mathbf x(t),\boldsymbol\omega(t)),\label{eq:state}\\
\mathbf y_{l,n_l}&=\mathscr H_l(\mathbf x(n_lT_l))+\boldsymbol\nu_{l,n_l},\label{eq:observe2}
\end{align}
where the process and mutually independent measurement noises are zero-mean white Gaussian. Let $\mathcal F_n$ denote the natural filtration generated by the initial state, all process and measurement noises up to base tick $n$, and the filter variables constructed from them. The objective is a fully distributed full-state estimate: each node filters locally and communicates inside $\mathcal N_l$.

\section{Multi-Rate Distributed Unscented Kalman Filtering}\label{sec:dcukf}
\subsection{Base grid and comminuted UKF}\label{sec:cukf}
For commensurate periods, $\Delta=\gcd_lT_l$ defines a base grid containing every sampling instant. At every node, the local UKF predicts at each base tick and updates only at the node's own sampling instants; \cref{Fig1} illustrates $T_l/\Delta\in\{2,3,5\}$.
\begin{figure}
    \centering
    \includegraphics[width=.68\columnwidth]{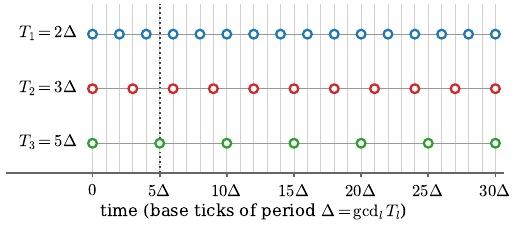}
    \caption{Multi-rate sampling on the base grid $\Delta$: circles are sensor sampling instants; the dotted guide marks a tick at which only the slowest sensor samples. The comminuted sub-steps are internal to each prediction and are not shown.}
    \label{Fig1}
\end{figure}
The comminuted UKF (CUKF) is a continuous--discrete UKF whose prediction advances one base period by $\mathsf m$ internal integration sub-steps of length $T_{\mathrm{CUKF}}=\Delta/\mathsf m$, reducing coarse-discretization error~\cite{Axelsson2015DiscreteTimeST}; $\mathsf m=1$ integrates directly on the base grid. The sub-steps are internal to each prediction, so \cref{Fig1} shows only the base grid. A measurement update is executed only if a sample is present; otherwise prediction is the complete recursion. Appendix~\ref{app:ukf} records the sigma-point stages.

\subsection{Statistical linearization and standing assumptions}\label{sec:linearization}
For a symmetric rule with $N_\sigma$ sigma points and covariance weights $W_i^{(c)}$, statistical linear regression of the CUKF points~\cite{commentson2002,UnscentedDistKalman} gives the computed matrices
\begin{equation}\label{eq:Hpseudo} {{H}_{l,n_l}}\triangleq \left ({\mathcal P_{\mathbf{x}_{n_l}\mathbf{y}_{l,n_l}}}\right )^{\T} \left ({\mathcal P_{l,n_l\mid n_l-1}}\right )^{-1} \end{equation}
and~\cite{WeightAveCons}
\begin{equation}\label{eq:Fpseudo} F_{l,n_l-1} \triangleq \left ({\mathcal P_{l,\mathbf{x}_{n_l-1}\mathbf{x}_{n_l \mid n_l-1}}}\right )^{\T} \left ({\mathcal P_{l,n_l-1|n_{l}-1}}\right )^{-1} \end{equation}
where
\begin{multline*}
{\mathcal P_{l,\mathbf{x}_{n_l-1}\mathbf{x}_{n_l \mid n_l-1}}}=\\ \sum_{i=0}^{N_\sigma-1}W_i^{(c)}(\tilde{\mathbf{x}}^{i}_{l,n_l-1|n_l-1}-\hat{\mathbf{x}}_{l,n_l-1|n_l-1})\times(\hat{\mathbf{x}}^{i}_{l,n_l|n_l-1}-\hat{\mathbf{x}}_{l,n_l|n_l-1})^{\T},
\end{multline*}
Here $\mathcal P_{l,n_l\mid n_l-1}$ is the covariance of the prediction error $\tilde{\mathbf x}_{l,n_l|n_l-1}$, and ${\mathcal P_{\mathbf{x}_{n_l}\mathbf{y}_{l,n_l}}}$ is the predicted state--measurement cross-covariance ($\mathcal P^{xy}$ in the stage summary of Appendix~\ref{app:ukf}). Two positive-semidefinite regression residuals recur below: the prediction residual $\Sigma^{f}_{l,n}=\mathcal P_f^{\sigma}-F\mathcal PF^{\T}\succeq0$, where $\mathcal P_f^{\sigma}$ is the weighted spread of the propagated sigma points about the predicted mean (equivalently $\mathcal P_{l,n_l|n_l-1}-Q_{n_l-1}$), and the measurement residual $\Sigma_{l,n}=\mathcal P_{yy}-R-H\mathcal P^{-}H^{\T}\succeq0$; they quantify what the best linear fit leaves unexplained. The accuracy of $F_{l,n}$ and $H_{l,n}$ depends on how well these regression maps approximate the nonlinear functions, and first-order Taylor arguments are not rigorous here. Following~\cite{boutayeb1997convergence,karvonen2014stability}, we instead work with \emph{exact} error relations: there exist unknown, bounded, random matrices $\Gamma_{l,n}$ and $Z_{l,n}$ -- compensation factors that absorb the linearization error and are never computed by the filter -- such that the prediction error and the innovation satisfy, exactly,
\begin{align}
\tilde{\mathbf{x}}_{l,n_l+1|n_l}&= \Gamma_{l,n_l}F_{l,n_l}\,\tilde{\mathbf{x}}_{l,n_l|n_l}+\mathbf{w}_{l,n_l}\label{eq:errdyn}\\
\boldsymbol{\epsilon}_{l,n_l}&=Z_{l,n_l} H_{l,n_l}\, \tilde{\mathbf{x}}_{l,n_l|n_l-1}+\mathbf{v}_{l,n_l}\label{eq:errobs}
\end{align}
The filter itself uses the algebraic sigma-point identities
\begin{align} {\mathcal P_{l,n_l\mid n_l-1}}=&\,F_{l,n_l-1} \mathcal P_{l,n_l-1} F_{l,n_l-1}^{\T}+Q_{n_{l}-1}+\Sigma^{f}_{l,n_l-1} \label{eq:Ppred}\\
\mathcal P_{l,n_l}=&\left ({I-\mathcal K_{l,n_l} H_{l,n_l}}\right ){\mathcal P_{l,n_l\mid n_l-1}} \label{eq:Pup}\\
\mathcal K_{l,n_l}=&\,\mathcal  P_{l,n_l\mid n_l-1} H_{l,n_l}^{\T}\left [{H_{l,n_l} \mathcal P_{l,n_l\mid {n_l-1}} H_{l,n_l}^{\T}+R_{l,n_l}+\Sigma_{l,n_l}}\right ]^{-1}\! \label{eq:Kgain}\end{align}
Thus $\Gamma$ occurs only in the true error relation through $E=(\Gamma-I)F$, never in the computed covariance, and the exact computed information recursion is
\begin{equation}\label{eq:Jinc}
\begin{split}
\mathcal P_{l,n_l}^{-1}&=\mathcal P_{l,n_l|n_l-1}^{-1}+J_{l,n_l},\\
J_{l,n_l}&\triangleq H_{l,n_l}^{\T}\big(R_{l,n_l}+\Sigma_{l,n_l}\big)^{-1}\!H_{l,n_l}\succeq0 .
\end{split}
\end{equation}
The error analyses treat $E=(\Gamma-I)F$ and $D=(Z-I)H$ as explicit perturbations; both vanish for linear maps. No structure of $\Gamma,Z$ is used anywhere in the analysis -- only the norm bounds on $E$ and $D$ in \cref{ass:bounds} enter -- so the diagonal construction of~\cite{boutayeb1997convergence,karvonen2014stability} is one admissible choice, not a requirement. The standing conditions are:
\begin{assumption}\label{ass:bounds}
For all nodes $l$ and base-grid ticks $n$:
\begin{enumerate}
    \item There exist scalars $\bar f,\check f,\bar h,\bar d_E,\bar d_D,\underline q,\bar q,\underline r,\bar r\in\mathbb{R}^{+}$ such that, almost surely,
 \begin{align}
\|F_{l,n}\|\leq\bar f,\quad \|F_{l,n}^{-1}\|\leq\check f,&\quad \|E_{l,n}\|\leq\bar d_E,\label{eq:Fbounds}\\
\|H_{l,n}\|\leq \bar h,&\quad \|D_{l,n}\|\leq\bar d_D, \label{Ccondition}\\
\underline q I\preceq Q_n\preceq Q_n+\Sigma^f_{l,n}\preceq\bar q I,&\qquad
\underline r I\preceq R_{l,n}\preceq\bar r I. \label{eq:QRbounds}
\end{align}
    Hence $\Gamma F=F+E$ and $ZH=H+D$, so the bounds on $F,H,E,D$ bound the compensated maps as well and no separate bounds on $\Gamma,Z$ are needed. Covariance results require boundedness only; the error theorems impose the stated smallness thresholds.
    \item The initial condition $\mathcal P_{l,0}$ is positive definite.
    \item The system is collectively uniformly detectable in the sense of \cref{def:collective} below.
\end{enumerate}
\end{assumption}
These uniform bounds are standard in nonlinear Kalman-filter stability analyses~\cite{boutayeb1997convergence,li2012stochastic,wei2018}.

\subsection{The multi-rate diffusion CUKF}\label{sec:alg1}
We first diffuse only the local state estimates, as in diffusion Kalman filtering~\cite{cattivelli2010diffusion,sayed2014adaptive}. Let $\mathbf{C}=[c_{j,l}]$ denote a left-stochastic combination-weight matrix compatible with the graph, i.e.,
\begin{equation}\label{Adiff}
c_{j,l}\geq 0,\qquad c_{j,l}=0 \ \text{if}\ j\notin\mathcal{N}_{l},\qquad \sum_{j\in\mathcal{N}_{l}}c_{j,l}=1 .
\end{equation}
The update at node $l$ is
\begin{equation}\label{eq:diffusion}
\hat{\mathbf{x}}_{l,n_{l}|n_{l}} \leftarrow \sum_{j\in \mathcal{N}_{l}}c_{j,l}\,\boldsymbol{\psi}_{j,n_{l}}
\end{equation}
where $\boldsymbol{\psi}_{j,n_{l}}$ is the intermediate (post-update) estimate at neighbour $j$.\begin{algorithm}[!tb]
\caption{The multi-rate distributed CUKF with state diffusion}\label{DCUKF1}
For the model \eqref{eq:state}--\eqref{eq:observe2} with weights as in \eqref{Adiff}, initialize $\hat{\mathbf{x}}_{l,0|-1}=\E{\mathbf{x}_{0}}$ and $\mathcal P_{l,0|-1}=\mathcal P_0$ at every node; offline, agree on the base period $\Delta=\gcd_{l\in\mathcal{N}}(T_{l})$ (commensurate sampling periods assumed), so that all filters tick on the base grid $t=n\Delta$. At every tick $n$, every node $l$:

Step 1: Run the CUKF prediction step (Appendix~\ref{app:ukf}) with the selected symmetric sigma-point rule over one base period via $\mathsf{m}$ internal sub-steps of length $T_{\mathrm{CUKF}}=\Delta/\mathsf{m}$, obtaining $\boldsymbol{\psi}_{l,n}\!\leftarrow\!\hat{\mathbf{x}}_{l,n|n-1}$ and $\mathcal{P}_{l,n|n-1}$.

Step 2: If a measurement is available at node $l$ at this tick, perform the CUKF measurement update (Appendix~\ref{app:ukf}), refreshing $\boldsymbol{\psi}_{l,n}$ and $\mathcal{P}_{l,n|n}$.

Step 3: Exchange $\boldsymbol{\psi}_{j,n}$ within $\mathcal{N}_{l}$ and apply the diffusion update \eqref{eq:diffusion} (covariances are not exchanged).
\end{algorithm}

The first stability result concerns the local CUKF itself.
\begin{theorem}\label{stablemma}
Let \cref{ass:bounds}.1--\ref{ass:bounds}.2 hold and suppose the local computed covariance satisfies $\underline p I\preceq\mathcal P_{l,n}^{-1}\preceq\bar p I$ for all $n$. Form the constants
\begin{align*}
\bar p^-&\triangleq\bar f^2\underline p^{-1}+\bar q,
&\bar K&\triangleq\bar p^-\bar h/\underline r,\\
\sigma_0&\triangleq\min\{1,(\bar p\bar p^-)^{-1}\},
&r_0&\triangleq(1+\delta\underline p)^{-1},\\
\delta&\triangleq\frac{\sigma_0^2\underline q}
 {\bar f^2(1+\bar K\bar h)^2},
&\bar d_*&\triangleq(1+\bar K\bar h)\bar d_E+\bar K\bar d_D(\bar f+\bar d_E).
\end{align*}
If
\begin{equation}\label{eq:localdiscrepancy}
\bar d_*\leq\varepsilon_*\triangleq
\sqrt{\frac{\underline p}{\bar p}}\,\frac{1-\sqrt{r_0}}{2},
\end{equation}
then the estimation error of the CUKF is exponentially bounded in mean square. A valid per-tick noise-free contraction factor is
$r_{\mathrm{loc}}\triangleq((1+\sqrt{r_0})/2)^2<1$. Define
\[
\tau_{\mathrm{loc}}\triangleq\frac{1-r_{\mathrm{loc}}}{2r_{\mathrm{loc}}},
\qquad
\rho_{\mathrm{loc}}\triangleq(1+\tau_{\mathrm{loc}})r_{\mathrm{loc}}=\frac{1+r_{\mathrm{loc}}}{2}<1.
\]
Then $\rho_{\mathrm{loc}}$ is a valid stochastic Lyapunov drift factor without requiring the nonlinear discrepancy matrices to be conditionally independent of the contemporaneous noises. The discrepancy condition is vacuous for linear dynamics and measurements, for which $E=D=0$.
\end{theorem}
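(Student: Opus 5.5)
We will use the standard stochastic-Lyapunov argument of \cite{karvonen2014stability,li2012stochastic}, with $V_n(\tilde{\mathbf x})=\tilde{\mathbf x}^{\T}\mathcal P_{l,n}^{-1}\tilde{\mathbf x}$ evaluated on the posterior error. The assumed sandwich $\underline p I\preceq\mathcal P_{l,n}^{-1}\preceq\bar p I$ gives $\underline p\|\tilde{\mathbf x}\|^2\le V_n\le\bar p\|\tilde{\mathbf x}\|^2$. From \eqref{eq:Ppred} we get $\mathcal P_{l,n|n-1}\preceq\bar p^-I$. From \eqref{eq:Kgain} together with $R+\Sigma\succeq\underline rI$ we get $\|\mathcal K\|\le\bar K$. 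Combining the exact relations \eqref{eq:errdyn}--\eqref{eq:errobs} with the update, the one-step posterior error becomes
\[
\tilde{\mathbf x}_{n+1}=(I-\mathcal K H)F\tilde{\mathbf x}_n+\Delta_n\tilde{\mathbf x}_n+\text{noise},
\]
with $\Delta_n=(I-\mathcal KH)E-\mathcal K D(F+E)$. Hence $\|\Delta_n\|\le\bar d_*$, which is exactly the constant in the statement. Here the noise collects $(I-\mathcal K H)\mathbf w-\mathcal K\mathbf v$. On ticks without a measurement we set $\mathcal K=0$ and $H=0$, and all estimates still hold.

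\textbf{Step 1: the nominal (linear) part contracts.} For $A_n=(I-\mathcal KH)F$ we prove
\[
A_n^{\T}\mathcal P_{n+1}^{-1}A_n\preceq r_0\,\mathcal P_n^{-1}.
\]
We write $I-\mathcal KH=\mathcal P_{n+1}\mathcal P_{n+1|n}^{-1}$, which follows from \eqref{eq:Pup}. This gives
\[
A_n^{\T}\mathcal P_{n+1}^{-1}A_n=F^{\T}\mathcal P_{n+1|n}^{-1}\mathcal P_{n+1}\mathcal P_{n+1|n}^{-1}F\preceq F^{\T}\mathcal P_{n+1|n}^{-1}F,
\]
since $\mathcal P_{n+1}\preceq\mathcal P_{n+1|n}$. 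Then $\mathcal P_{n+1|n}\succeq F\mathcal P_nF^{\T}+\underline q I$, and a matrix-inversion-lemma bound gives
\[
F^{\T}(F\mathcal P_nF^{\T}+\underline qI)^{-1}F\preceq\bigl(\mathcal P_n+\underline q F^{-1}F^{-\T}\bigr)^{-1}\preceq(1+\delta'\underline p)^{-1}\mathcal P_n^{-1},
\]
with $\delta'=\underline q/\bar f^2$. The constant $\delta$ in the statement is smaller, because it carries the factors $\sigma_0^2$ and $(1+\bar K\bar h)^{-2}$. We expect these to appear if the cleaner factorisation is not used and one instead bounds $\mathcal P_{n+1|n}^{-1}\mathcal P_{n+1}$ through $\|I-\mathcal KH\|\le 1+\bar K\bar h$ together with the conditioning ratio $\sigma_0$. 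Since $r_0$ is increasing in $\delta^{-1}$, any smaller $\delta$ still yields a valid $r_0$. It is therefore enough to prove the inequality with $\delta'$ and note that $\delta\le\delta'$.

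\textbf{Step 2: perturbation, which we expect to be the main obstacle.} We work in the weighted norms $\|\cdot\|_{\mathcal P^{-1}}$. Step 1 gives $\|A_n\tilde{\mathbf x}\|_{n+1}\le\sqrt{r_0}\|\tilde{\mathbf x}\|_n$. For the perturbation,
\[
\|\Delta_n\tilde{\mathbf x}\|_{n+1}\le\sqrt{\bar p/\underline p}\,\bar d_*\|\tilde{\mathbf x}\|_n\le\tfrac{1-\sqrt{r_0}}2\|\tilde{\mathbf x}\|_n,
\]
by \eqref{eq:localdiscrepancy}. The triangle inequality then gives the noise-free contraction factor
\[
\sqrt{r_{\rm loc}}=\tfrac{1+\sqrt{r_0}}2 .
\]
Working with norms rather than cross-term expectations is what avoids any independence assumption between $\Delta_n$ (random through $\Gamma$ and $Z$) and the current noise.

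\textbf{Step 3: Lyapunov drift.} Write $\tilde{\mathbf x}_{n+1}=a+b$, where $a$ is the noise-free part and $b$ the noise part. We apply Young's inequality
\[
\|a+b\|^2\le(1+\tau)\|a\|^2+(1+\tau^{-1})\|b\|^2
\]
with $\tau=\tau_{\rm loc}$, pathwise and therefore with no conditional independence needed. Taking conditional expectation on $\mathcal F_n$ gives
\[
\E{V_{n+1}\mid\mathcal F_n}\le\rho_{\rm loc}V_n+\mu,
\]
where $\mu=(1+\tau_{\rm loc}^{-1})\bar p\bigl((1+\bar K\bar h)^2\bar q+\bar K^2\bar r\bigr)n_{\mathbf x}$ bounds the noise second moment. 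Iterating this and using the norm sandwich gives
\[
\E{\|\tilde{\mathbf x}_n\|^2}\le\frac{\bar p}{\underline p}\rho_{\rm loc}^n\E{\|\tilde{\mathbf x}_0\|^2}+\frac{\mu}{\underline p(1-\rho_{\rm loc})},
\]
which is exponential mean-square boundedness. For linear maps $E=D=0$, so $\bar d_*=0$ and the discrepancy condition holds trivially.
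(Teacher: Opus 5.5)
Your proof is correct and shares the paper's architecture: the Lyapunov function $\tilde{\mathbf x}^{\T}\mathcal P_n^{-1}\tilde{\mathbf x}$, the split into the computed closed-loop map plus the perturbation $(I-\mathcal KH)E-\mathcal KD(F+E)$ of norm at most $\bar d_*$, the triangle inequality in the $\mathcal P^{-1}$ metric, and a pathwise Young inequality with $\tau_{\mathrm{loc}}$. Where you genuinely differ is the nominal contraction step.

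\textbf{Paper's route.} The paper writes the computed covariance in Joseph form, $\mathcal P_n=B_n\mathcal P_{n-1}B_n^{\T}+W_n$ with $B_n=(I-\varrho_n\mathcal K_nH_n)F_{n-1}$. It then
\begin{enumerate}
\item lower-bounds $W_n\succeq\sigma_0^2\underline qI$ via $\sigma_{\min}(I-\mathcal K_nH_n)\geq(\bar p\bar p^-)^{-1}$,
\item upper-bounds $\|B_n\|\leq\bar f(1+\bar K\bar h)$,
\item factors out $B_n$ and inverts.
\end{enumerate}
This is exactly where the factors $\sigma_0^2$ and $(1+\bar K\bar h)^{-2}$ in $\delta$ come from, as you suspected.

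\textbf{Your route.} You use $I-\mathcal KH=\mathcal P_{n+1}\mathcal P_{n+1|n}^{-1}$ and $\mathcal P_{n+1}\preceq\mathcal P_{n+1|n}$ (from \eqref{eq:Jinc}) to discard the update altogether, and you contract through the prediction step alone. This is shorter and needs neither the gain-dependent lower bound on $W_n$ nor the bound on $\|B_n\|$. It also yields the sharper $\delta'=\underline q/\bar f^2\geq\delta$, so it would even justify a larger coherence threshold $\varepsilon_*$. The paper's Joseph-form route is the more conventional one in the cited literature, but it gives away constants.

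\textbf{One small correction.} From \eqref{eq:errdyn}--\eqref{eq:errobs} the posterior error is $(I-\mathcal K(H+D))\bigl((F+E)\tilde{\mathbf x}+\mathbf w\bigr)-\mathcal K\mathbf v$. The noise part is therefore $(I-\mathcal K(H+D))\mathbf w-\mathcal K\mathbf v$, not $(I-\mathcal KH)\mathbf w-\mathcal K\mathbf v$. Since $D$ may depend on the contemporaneous noises, you should not obtain $\mu$ by dropping the $\mathbf w$--$\mathbf v$ cross term. Use a pathwise bound instead, such as $\|b\|^2\leq2(1+\bar K(\bar h+\bar d_D))^2\|\mathbf w\|^2+2\bar K^2\|\mathbf v\|^2$. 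This changes only the constant, and it is precisely how the paper treats $\eta_n$.
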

The proof is in Appendix~\ref{app:proofs}. Existing multi-rate fusion schemes generally require the system to be observable to every sensor class, or confine fusion to the intersection of the per-class observable subspaces~\cite{bar1986effect,julier2009general,Ghosalfusion2019,abooshahab2020}; the following strategies relax that requirement.

\section{Distributed Estimation Under Collective Detectability}\label{sec:strategies}
State-only diffusion leaves each covariance local and therefore cannot control directions invisible to that node. We consider three remedies: filtering only locally observable coordinates~\cite{SinghPalDecentralized,ZhaoMili2018TSG}, diffusing information matrices, and using a fixed-attenuation $H_{\infty}$ information update. Their guarantees and costs are summarized in \cref{sec:analysis,sec:compare}.

\subsection{Detectability notions}\label{sec:detectability}
In this paper, we consider the following detectability notions \cite{conte2007algebraic}.
\begin{definition}\label{def:local}
 A sub-system \eqref{eq:state}-\eqref{eq:observe2} is said to be detectable if there exists an open and dense subset $M \subset \mathbb{R}^{n_{\mathbf{x}}}$ such that system \eqref{eq:state} is locally weakly detectable \cite{hermann1977nonlinear} at any initial state $\mathbf{x}_0 \in M$.
\end{definition}
Informally, the condition below asks that, over every window of $N$ base ticks, each node's one-hop neighbourhood accumulate at least $\beta$ of information in every direction outside one fixed, network-wide blind subspace $\mathcal V$.
\begin{definition}[Uniform one-hop neighbourhood collective detectability]\label{def:collective}
For node $l$, let
$J^{\mathrm{col}}_{l,n}\triangleq\sum_{j\in\mathcal N_l(n)}J_{j,n}$
be the sum of the computed information increments \eqref{eq:Jinc} available in its neighbourhood at base tick $n$ (an empty sum is zero). Let $\Phi_l(n,k)\triangleq F_{l,n-1}\cdots F_{l,k}$, with $\Phi_l(n,n)=I$, denote the transition product of node $l$'s computed statistical linearizations (a product of $n-k$ factors), and write $\Phi_l(k,n)\triangleq\Phi_l(n,k)^{-1}$. The system is called \emph{uniformly one-hop collectively detectable} if there exist an integer $N\geq1$, a constant $\beta>0$, and a fixed subspace $\mathcal V\subseteq\mathbb R^{n_{\mathbf x}}$ with $\Phi_l(k,n)\mathcal V=\mathcal V$ and $J^{\mathrm{col}}_{l,k}\Phi_l(k,n)v=0$ for every $v\in\mathcal V$, such that, with $\Pi$ the orthogonal projector onto $\mathcal V^\perp$, for every node $l$ and every $n\geq N$,
\begin{equation}\label{eq:gramian}
\sum_{k=n-N+1}^{n}\Phi_l(k,n)^{\T}J^{\mathrm{col}}_{l,k}\Phi_l(k,n)\succeq\beta\Pi.
\end{equation}
For $\mathcal V\neq\{0\}$ the pair is thus collectively detectable on $\mathcal V^{\perp}$ only, which is all that is required in the sequel.
\end{definition}
\begin{remark}\label{rem:def}
Four reading notes on \cref{def:collective}. \emph{(i) Relation to the classical condition.} For $\mathcal V=\{0\}$, \eqref{eq:gramian} reduces to the usual uniform information-Gramian requirement, imposed along the trajectory the filter actually computed~\cite{boutayeb1997convergence,li2012stochastic}. \emph{(ii) What the invisible subspace costs.} For nontrivial $\mathcal V$, nothing is claimed along $\mathcal V$ -- modes there may be neutral or unstable -- and every guarantee is a statement on the factor space $\mathbb R^{n_{\mathbf x}}/\mathcal V$, the standard quotient construction of geometric control theory~\cite{wonham1985,conte2007algebraic}. \emph{(iii) Realized, not nominal.} The condition constrains the realized sequences $F,J$, not a design model: a lost packet is simply a tick with $J=0$, and the condition is evaluated numerically in \cref{sec:sim}. A design-time observability certificate for the nominal model transfers to the realized condition only together with a quantitative perturbation margin. \emph{(iv) Place in the literature.} Because \eqref{eq:gramian} must hold at every node from its own one-hop information, the condition is stronger than a single graph-wide requirement; local, distributed, or weighted uniform detectability~\cite{anderson1981detectability,cattivelli2010diffusion,li2018weightedly} each imply it, but not conversely.\phantomsection\label{rem:collectivedet}
\end{remark}
The definition extends the collective/global observability conditions of linear distributed filtering~\cite{BattistelliChisci2014,BattistelliChisci2015TAC,wei2018,HeXueFang2018Automatica} to realized, multi-rate statistical linearizations; in this windowed, realized form it is, to our knowledge, new.
\begin{lemma}\label{lem:rank}
Let $\mathcal O_{l,n}$ be a finite-window observability matrix of the realized \emph{computed} pair $(F_{l,n},J_{l,n}^{1/2})$, where any positive-semidefinite square root may be used. (i)~If $\mathcal O_{l,n}^{\T}\mathcal O_{l,n}\succeq\beta_l I$ with $\beta_l>0$ independent of $n$, the pair is uniformly completely observable~\cite{anderson1981detectability}. (ii)~If the nullspace of $\mathcal O_{l,n}$ is fixed and invariant under $F_{l,n}$, then any fixed, uniformly conditioned $\Theta_l$ built from it gives a block-triangular observable/unobservable decomposition, and the observable block is uniformly completely observable whenever its reduced Gramian is uniformly bounded below.
\end{lemma}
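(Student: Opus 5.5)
For part (i) I would simply unfold the definition of uniform complete observability from~\cite{anderson1981detectability}. Fix a window length $N_l$. Write the finite-window observability matrix of the realized pair $(F_{l,n},J_{l,n}^{1/2})$ as the stacked blocks $J_{l,k}^{1/2}\Phi_l(k,n-N_l+1)$ for $k=n-N_l+1,\dots,n$. Then $\mathcal O_{l,n}^{\T}\mathcal O_{l,n}$ equals the observability Gramian
\[
\sum_{k}\Phi_l(k,n-N_l+1)^{\T}J_{l,k}\Phi_l(k,n-N_l+1).
\]
This identity does not depend on which positive-semidefinite square root is used, because $(J^{1/2})^{\T}J^{1/2}=J$ for any such root. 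The hypothesis $\mathcal O^{\T}\mathcal O\succeq\beta_lI$ is therefore exactly the lower Gramian bound, uniformly in $n$.

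For the upper bound I would use \cref{ass:bounds}.1. The bounds give $\|\Phi_l\|\le\bar f^{N_l}$, and $J_{l,k}\preceq\bar h^2\underline r^{-1}I$, since $R+\Sigma\succeq\underline rI$. The Gramian is then bounded above by $N_l\bar f^{2N_l}\bar h^2/\underline r$. Finally, the invertibility bound $\|F^{-1}\|\le\check f$ lets me move the reference time between the window start and the window end, at a multiplicative cost of $\check f^{2N_l}$. This matches whichever convention (initial- or final-time Gramian) the cited definition uses.

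For part (ii), let $\mathcal U$ be the fixed nullspace of $\mathcal O_{l,n}$. I choose $\Theta_l=[\Theta_o\ \Theta_u]$ with the columns of $\Theta_u$ spanning $\mathcal U$ and those of $\Theta_o$ spanning a fixed complement, for instance $\mathcal U^\perp$. Taking an orthonormal $\Theta_l$ makes it uniformly conditioned, so $\|\Theta_l\|=\|\Theta_l^{-1}\|=1$.

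Invariance $F_{l,n}\mathcal U\subseteq\mathcal U$ makes $\Theta_l^{-1}F_{l,n}\Theta_l$ block upper triangular, with the zero block mapping $\Theta_o$-coordinates into $\Theta_u$-coordinates. Next I show $J_{l,k}^{1/2}\Theta_u=0$. Vectors in $\mathcal U$ are annihilated by every block row, and in particular by the $k$ = window-start row, which is $J^{1/2}$ itself. Since $\mathcal U$ is fixed across $n$, this holds at every tick. The transformed output map is then $[\,J^{1/2}\Theta_o\ \ 0\,]$, which is the observable/unobservable decomposition.

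The reduced pair is $(F_{oo,n},J^{1/2}\Theta_o)$. Its blocks are bounded by $\bar f$ and $\bar h\underline r^{-1/2}$ because $\Theta_l$ is orthonormal. Its invertibility follows because block-triangular invertibility gives $F_{oo}^{-1}$ as a diagonal block of $F^{-1}$, hence bounded by $\check f$. The reduced observability matrix equals $\mathcal O_{l,n}\Theta_o$, because the triangular structure makes the $o$-block of a product equal the product of the $o$-blocks. With the assumed uniform lower bound on the reduced Gramian, part (i) applied to the reduced pair gives uniform complete observability.

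The main obstacle is the bookkeeping in part (ii): verifying that the reduced transition products are exactly the $oo$-blocks of $\Phi_l$, and that the output annihilates $\mathcal U$ at every tick and not only on the window as a whole. I would settle both by induction on the product length, using the fixedness of $\mathcal U$.
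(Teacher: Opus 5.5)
Your proposal is correct and follows the paper's proof. Part (i) is the identification of $\mathcal O_{l,n}^{\T}\mathcal O_{l,n}$ with the windowed information Gramian; your upper bound from \cref{ass:bounds}.1 and the reference-time shift make explicit what the paper leaves implicit. Part (ii) uses the same orthonormal nullspace/complement transformation, invariance for the triangular form, and the kernel property to obtain zero unobservable columns and the reduced Gramian.

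One small slip: invariance of $\mathcal U$ makes the block mapping $\Theta_u$-coordinates into $\Theta_o$-coordinates vanish (the paper's $U^{\T}F_{l,n}W=0$, which is block lower triangular in your ordering), not the reverse. None of your later steps depend on which off-diagonal block is zero.
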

However, rank alone does not suffice for (i): observability strength may decay with $n$. For a time-varying nullspace, a time-varying $\Theta_{l,n}$ and the correspondingly transformed dynamics are required. The nonlinear Lie-derivative rank condition~\cite{conte2007algebraic,nijmeijer1990nonlinear} guides offline construction of $\Theta_l$ but is distinct from the realized LTV matrix in \cref{lem:rank}; only the latter enters the analysis.\phantomsection\label{rem:hk}

\subsection{Strategy 1: locally observable subspaces}\label{sec:alg2}
For use in the simplification of the local estimation, node $l$ maintains the observability decomposition
$$\begin{bmatrix}
\hat{\mathbf{x}}^{ob}_{l,n_l+1|n_l}\\[2pt] \hat{\mathbf{x}}^{uob}_{l,n_l+1|n_l}
\end{bmatrix}=\Theta_{l}\,\hat{\mathbf{x}}_{l,n_l+1|n_l}$$
with a nonsingular transformation $\Theta_{l}$ obtained from the decomposition of \cref{lem:rank}. Each node then runs the CUKF only on its observable coordinates, as summarized in \cref{DCUKF2}.
\begin{algorithm}[!tb]
\caption{Diffusion multi-rate CUKF on locally observable subspaces}\label{DCUKF2}
For each node $l$ and its neighbourhood $\{j\in \mathcal N_l\}$, repeat at every base-grid tick $n$:
   \begin{enumerate}
  \item (Offline) agree on the base period $\Delta=\gcd_{l}(T_{l})$ and construct each local transformation $\Theta_l$. All communicated estimates are reconstructed in the common full-state coordinates before diffusion, so no inter-node coordinate map is used online.
\item Transform the current estimate as $[\hat{\mathbf{x}}^{ob,\T}_{l,n|n},\,\hat{\mathbf{x}}^{uob,\T}_{l,n|n}]^{\T}=\Theta_{l}\hat{\mathbf{x}}_{l,n|n}$.
\item Apply the CUKF prediction step (Appendix~\ref{app:ukf}) to the reduced dynamics $\mathscr F^{ob}$ with the corresponding reduced sigma-point set, obtaining $\hat{\mathbf{x}}^{ob}_{l,n+1|n}$ and $\mathcal P^{ob}_{l,n+1|n}$; propagate the unobservable coordinates by prediction only.
\item Initialize $\boldsymbol\psi^{ob}_{l,n+1}\leftarrow\hat{\mathbf{x}}^{ob}_{l,n+1|n}$ and $\mathcal P^{ob}_{l,n+1}\leftarrow\mathcal P^{ob}_{l,n+1|n}$. If node $l$ has a measurement at tick $n+1$, perform the reduced CUKF measurement update. Then reconstruct
$$
\boldsymbol{\psi}_{l,n+1}=\Theta_l^{-1}\begin{bmatrix}
\boldsymbol{\psi}^{ob}_{l,n+1}\\[2pt] \boldsymbol{\psi}^{uob}_{l,n+1}
\end{bmatrix}.
$$
\item Node $j$ communicates the full-coordinate estimate $\boldsymbol\psi_{j,n+1}$; the reduced covariance $\mathcal P^{ob}_{j,n+1}$ remains local. For $\sum_{j\in\mathcal N_l}c_{j,l}=1$, set
$$
\hat{\mathbf{x}}_{l,n+1|n+1}\leftarrow\sum_{j\in\mathcal N_l}c_{j,l}\boldsymbol\psi_{j,n+1}.
$$
   \end{enumerate}
\end{algorithm}

\subsection{Strategy 2: diffusing information matrices}\label{sec:alg3}
An alternative is to diffuse the covariances as well. This is similar in spirit to the weighted-average consensus UKF of \cite{WeightAveCons}, with two differences: a single fusion step per base tick (measurement update and fusion in one step), and averaging on the \emph{information} matrices $\mathcal P^{-1}$ rather than the covariances. It is this information-form choice that reinstates the neighbourhood information term of \eqref{eq:inforecur} and thereby enables \cref{thm:Pbounds3}. The diffusion-based approach for both state estimates and covariances is summarized in \cref{DCUKF3}. Choosing proper diffusion weights $c_{j,l}$ can improve the estimator; the Bar-Shalom--Campo \cite{Ghosalfusion2019} or covariance-intersection \cite{HuXieZhang2012TSP,abooshahab2020} methods obtain such weights more systematically, a point we return to in \cref{sec:analysis,sec:sim}. Notably, a single fusion exchange per interval is exactly the regime in which CI-type consensus retains stability, whereas consensus on measurements requires sufficiently many exchanges per interval to secure detectability from the $L$-hop neighbourhood \cite{BattistelliChisci2015TAC}.
\begin{algorithm}[!tb]
\caption{One-step multi-rate DCUKF with state and information diffusion}\label{DCUKF3}
Consider the state-space model \eqref{eq:state}-\eqref{eq:observe2} and combination weights as in \eqref{Adiff}.

\noindent
Initialization at every node $l$:
$$ \hat{\mathbf{x}}_{l,0|-1}=\E{\mathbf{x}_{0}},\qquad \mathcal P_{l,0|-1}=\mathcal P_0 .$$
Offline: agree on the base period $\Delta=\gcd_{l\in\mathcal{N}}(T_{l})$. At every tick $n$ of the base grid, at every node $l$:

Step 1: Run the CUKF prediction step (Appendix~\ref{app:ukf}) over one base period with $\mathsf{m}$ internal CUKF sub-steps.

Step 2: If a measurement is available at node $l$ at this tick, perform the CUKF measurement update (Appendix~\ref{app:ukf}); denote the results $\boldsymbol{\psi}_{l,n}$, $\mathcal{P}_{l,n}$.

Step 3: For $\sum_{j\in \mathcal N_l} c_{j,l}=1$, exchange $(\boldsymbol{\psi}_{j,n},\mathcal P_{j,n}^{-1})$ within $\mathcal N_{l}$ and fuse the covariances,
$$
(\mathcal P_{l,n|n})^{-1} \leftarrow \textstyle\sum_{j\in \mathcal N_l}c_{j,l}\,(\mathcal P_{j,n})^{-1},
$$
with the fused mean given by either the \emph{diffusion mean}
\begin{equation}\label{eq:fusemeana}
\hat{\mathbf{x}}_{l,n|n} \leftarrow \textstyle\sum_{j\in \mathcal N_l}c_{j,l}\,\boldsymbol{\psi}_{j,n}
\end{equation}
or the \emph{information mean}
\begin{equation}\label{eq:fusemeanb}
\hat{\mathbf{x}}_{l,n|n} \leftarrow \mathcal P_{l,n|n}\textstyle\sum_{j\in \mathcal N_l}c_{j,l}\,(\mathcal P_{j,n})^{-1}\boldsymbol{\psi}_{j,n}.
\end{equation}
\end{algorithm}

\subsection{Strategy 3: a fixed-attenuation \texorpdfstring{$H_{\infty}$}{H-infinity} information update}\label{sec:alg4}
The third strategy adapts the unscented $H_{\infty}$ information filter~\cite{LiJia2010UHinf} to one-hop, multi-rate operation. Unlike prior all-to-all or decentralized implementations with per-step attenuation tuning~\cite{LiJia2010UHinf,ZhaoMili2018TSG,ZhaoMili2019TSP}, \cref{UKF22,ass:hinf,thm:hinfP} use a fixed $\theta$ and give a uniform well-posedness condition under collective detectability.

In the information form, the neighbourhood information contributions at node $l$ are
\begin{align}
\mathcal I_{l,n_l}&= \textstyle\sum_{j\in\mathcal{N}_{l}(n_l)}\mathcal P^{-1}_{j,n_l|n_l-1}\mathcal P^{xy}_{j,n_l|n_l-1}R^{-1}_{j,n_l}\times({\mathcal P^{xy}_{j,n_l|n_l-1}})^{\T}\mathcal P^{-1}_{j,n_l|n_l-1},\label{eq:hinfI}\\
\mathcal T_{l,n_l}&= \textstyle\sum_{j\in\mathcal{N}_{l}(n_l)}\mathcal P^{-1}_{j,n_l|n_l-1}\mathcal P^{xy}_{j,n_l|n_l-1}R^{-1}_{j,n_l}\times(\mathbf{y}_{j,n_l}-\mathscr{H}_{j}(\hat{\mathbf{x}}_{l,n_l|n_l-1})),\label{eq:hinfT}
\end{align}
where $\mathcal P^{xy}_{j}$ denotes the state--measurement cross-covariance of the CUKF update (Appendix~\ref{app:ukf}) at node $j$. The innovation in \eqref{eq:hinfT} is evaluated at node $l$'s predicted estimate, while the weights are calibrated at node $j$'s. The exact-relation apparatus extends to such cross evaluations: for each pair there is a bounded $Z_{j|l,n}$ -- the cross-evaluation analogue of the compensation matrices $Z$ of \cref{sec:linearization} -- with $\mathbf{y}_{j,n}-\mathscr H_{j}(\hat{\mathbf{x}}_{l,n|n-1})=Z_{j|l,n}H_{j,n}\tilde{\mathbf{x}}_{l,n|n-1}+\mathbf{v}_{j,n}$ and the resulting weight/relation mismatch is collected in the \emph{cross-evaluation mismatch}
\begin{equation}\label{eq:crossdelta}
\Upsilon_{l,n}\triangleq\textstyle\sum_{j\in\mathcal{N}_{l}(n)}H_{j,n}^{\T}R_{j,n}^{-1}\big(Z_{j|l,n}-I\big)H_{j,n},
\end{equation}
which vanishes for linear measurement maps ($Z_{j|l}=Z_{j}=I$). Also, $\mathcal P^{-1}\mathcal P^{xy}R^{-1}({\mathcal P^{xy}})^{\T}\mathcal P^{-1}=H^{\T}R^{-1}H$ by \eqref{eq:Hpseudo}. The $H_\infty$ measurement update subtracts the fixed attenuation penalty $\theta^{-2}I$ from the stored information
\begin{equation}\label{eq:hinfup}
\mathcal P_{l,n_l\vert n_l}^{-1}=\mathcal P_{l,n_l\vert n_l-1}^{-1}+\mathcal I_{l,n_l}-\theta^{-2}I ,
\end{equation}
the intermediate estimate is reconstructed as
\begin{equation}\label{eq:hinfpsi}
\boldsymbol{\psi}_{l,n_l}=\hat{\mathbf{x}}_{l,n_l\vert n_l-1}+\big(\mathcal P_{l,n_l\vert n_l}^{-1}+\theta^{-2}I\big)^{-1}\mathcal T_{l,n_l} ,
\end{equation}
and the estimates are then combined by the diffusion update \eqref{eq:diffusion}.
\begin{algorithm}[!tb]
\caption{Distributed nonlinear $H_\infty$ information filter}\label{UKF22}
For the system model \eqref{eq:state}-\eqref{eq:observe2}, starting with $\hat{\mathbf{x}}_{l,0\vert-1}=\E{\mathbf{x}_{0}}$ and $\mathcal P_{l,0\vert-1}=\mathcal P_{0}$; offline, agree on the base period $\Delta=\gcd_{l}(T_{l})$. At every tick of the base grid, at every node $l$:

Step 1: Run the CUKF prediction step (Appendix~\ref{app:ukf}) on the base grid.

Step 2: Exchange the information contributions available in the neighbourhood at this tick (empty sums if none) and perform the $H_\infty$ information update \crefrange{eq:hinfI}{eq:hinfpsi} at every base tick; in particular, the subtraction of $\theta^{-2}I$ in \eqref{eq:hinfup} is applied at every base tick.

Step 3: Perform the diffusion update \eqref{eq:diffusion}.
\end{algorithm}

\begin{assumption}[$H_{\infty}$ feasibility]\label{ass:hinf}
Let $\bar d$ be the maximum neighbourhood cardinality and, for the chosen fixed $\theta$, define
\begin{align*}
\bar p_H&\triangleq\max\!\left\{
\max_l\lambda_{\max}(\mathcal P_{l,0}^{-1}+\theta^{-2}I),
\ \underline q^{-1}+\bar d\bar h^2\underline r^{-1}\right\},\\
\alpha_H&\triangleq(1+\bar p_H\bar q\check f^2)^{-1}.
\end{align*}
and let $\underline p_0\triangleq(\alpha_H/2)^{N-1}\beta$. The attenuation level satisfies $\theta^{-2}<\underline p_0/2$, $\mathcal V=\{0\}$ in \cref{def:collective}, and the finite-prefix regularized-information margin
$M_{l,n}=\mathcal P_{l,n}^{-1}+\theta^{-2}I\succeq\underline p_0I$
holds for $0\leq n<N$ and every node. These are finite, checkable feasibility conditions for the selected attenuation level; the windowed Gramian restores the regularized-information lower margin only after a complete window and cannot replace the check over this initial prefix of ticks. The attenuation subtraction itself does not create this margin.
When $g\triangleq\alpha_H/\bar f^{2}\in(0,1)$, a conservative sufficient initialization for that finite-prefix check is
\[
M_{l,0}\succeq g^{-(N-1)}\!\left(\underline p_{0}+\frac{\theta^{-2}g}{1-g}\right)I.
\]
Because $\alpha_H$ depends on $\bar p_H$, and hence on the proposed initial information, this sufficient condition is evaluated self-consistently for the chosen initialization; direct verification of the finite prefix remains the definitive test.
\end{assumption}

\begin{remark}\label{rem:hinfquot}
The restriction $\mathcal V=\{0\}$ is intrinsic, not an artefact of the proof. Along a neutral collectively invisible direction the predicted information decays geometrically while \eqref{eq:hinfup} subtracts the fixed $\theta^{-2}I$; the regularized information therefore eventually drops below any fixed floor, $\mathcal P^{-1}$ loses positive definiteness, and the $H_{\infty}$ construction cannot be sustained on the full state space. For maps that descend to the quotient, \cref{cor:quoterr} formalizes the reduction; the $H_{\infty}$ construction likewise transfers to the quotient-reduced filter, with the projected $E$ and $D$ discrepancies, and on the quotient $\mathcal V=\{0\}$ by construction.
The discrepancy \eqref{eq:crossdelta} is a genuine feature of the diffusion $H_{\infty}$ architecture, not a proof artefact: neighbours calibrate their information contributions at their own estimates while node $l$ consumes them at its own. The cross-evaluation mismatch $\Upsilon_{l,n}$ of \eqref{eq:crossdelta} vanishes for linear channels, and for smooth channels it is of the order of the neighbourhood estimate spread. Condition \eqref{eq:hinfdiscrepancy} makes the combined dynamic and cross-evaluation coherence requirement explicit; in nonlinear applications its bound is generally certified offline from regional smoothness and estimate-spread bounds rather than observed directly by the deployed filter.\phantomsection\label{rem:crossnode}
\end{remark}

\section{Performance Analysis}\label{sec:analysis}
\begin{table*}[t]
\centering
\caption{Comparison of the proposed multi-rate distributed CUKF strategies ($n_{\mathbf{x}}$: state dimension; $n^{ob}_{\mathbf{x},l}\leq n_{\mathbf{x}}$: dimension of node $l$'s observable subspace).}
\label{tab:compare}
\footnotesize
\setlength{\tabcolsep}{4pt}
\begin{tabular}{@{}llp{1.9cm}cp{4.0cm}@{}}
\hline
Strategy & Exchanged / neighbour / tick & Cov.\ bnd & Err.\ bnd & Practical remarks\\
\hline
\cref{DCUKF1}, fixed weights & $\hat{\mathbf{x}}$ ($n_{\mathbf{x}}$) & none in general & --- & can diverge if a node is not locally detectable\\
\cref{DCUKF1}, CI weights & $\hat{\mathbf{x}}$, $\trace\mathcal P$ ($n_{\mathbf{x}}{+}1$) & diverges (companion paper) & --- & reweighting alone cannot restore unobservable directions\\
\cref{DCUKF2} (subspaces) & $\hat{\mathbf{x}}$ ($n_{\mathbf{x}}$) & \cref{prop:alg2} (reduced) & \cref{cor:alg2err} (subspace) & fixed local decompositions; smallest covariance matrices\\
\cref{DCUKF3} (state $+$ info) & $\hat{\mathbf{x}}$, $\mathcal P^{-1}$ ($\tfrac{n_{\mathbf{x}}(n_{\mathbf{x}}+3)}{2}$) & \cref{thm:Pbounds3} & \cref{thm:errbound} & recommended default (\cref{sec:sim})\\
\cref{UKF22} ($H_{\infty}$) & as \cref{DCUKF3} & \cref{thm:hinfP} & \cref{thm:hinferr} & robust to model uncertainty; needs \cref{ass:hinf}\\
\hline
\end{tabular}
\end{table*}

We now analyze the three strategies in turn -- what each one guarantees, under which conditions, and where the guarantees stop. All proofs are collected in Appendix~\ref{app:proofs}.

\subsection{The subspace filter and neighbourhood-information recursions}
\begin{theorem}\label{thm:Pbounds}
Consider the system \eqref{eq:state}-\eqref{eq:observe2}. Under \cref{ass:bounds}, any local recursion that assimilates the neighbourhood information as in \eqref{eq:inforecur} -- the family of consensus-on-information updates, used below as the comparison object for \cref{DCUKF3} -- has uniformly bounded quotient error covariances: there exist $0<\underline p\leq\bar p$ such that
$$
\underline{p}\,\Pi\preceq \mathcal P_{l,n_l}^{-1}\preceq\bar{p}\,I, \qquad n_l \geq N .
$$
\end{theorem}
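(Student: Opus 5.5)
We work with the neighbourhood-information recursion \eqref{eq:inforecur} in its generic form
\[
\mathcal P_{l,n}^{-1}=\mathcal P_{l,n|n-1}^{-1}+J^{\mathrm{col}}_{l,n},\qquad \mathcal P_{l,n|n-1}=F_{l,n-1}\mathcal P_{l,n-1}F_{l,n-1}^{\T}+Q_{n-1}+\Sigma^f_{l,n-1},
\]
the prediction being \eqref{eq:Ppred}. The two bounds are proved separately.

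\textbf{Upper bound.} Since $Q_{n-1}+\Sigma^f_{l,n-1}\succeq\underline q I$, the predicted covariance satisfies $\mathcal P_{l,n|n-1}\succeq\underline q I$. Hence $\mathcal P_{l,n|n-1}^{-1}\preceq\underline q^{-1}I$, regardless of the past.

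Each increment obeys $J_{j,n}\preceq\bar h^2\underline r^{-1}I$, because $R+\Sigma\succeq\underline r I$ and $\|H\|\le\bar h$. Therefore $J^{\mathrm{col}}_{l,n}\preceq\bar d\,\bar h^2\underline r^{-1}I$, where $\bar d$ is the maximal neighbourhood size.

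Combining these gives $\bar p=\underline q^{-1}+\bar d\bar h^2\underline r^{-1}$ for every $n\ge1$. If the recursion instead mixes neighbours' information with convex weights $c_{j,l}$, the same bound is preserved, since a convex combination of matrices bounded by $\bar p I$ is still bounded by $\bar p I$.

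\textbf{Lower bound on $\mathcal V^\perp$.} The plan is to show that information is lost at most geometrically under prediction, and then sum the fresh increments over one window of \eqref{eq:gramian}.

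For the one-step loss, write $\mathcal P_{l,n|n-1}=F(\mathcal P_{l,n-1}+F^{-1}\bar Q F^{-\T})F^{\T}$ with $\bar Q\preceq\bar q I$. Using $\|F^{-1}\|\le\check f$ and $\mathcal P_{l,n-1}\succeq\bar p^{-1}I$ (from the upper bound), we get
\[
F^{-1}\bar QF^{-\T}\preceq \bar q\check f^2\bar p\,\mathcal P_{l,n-1}.
\]
It follows that
\[
\mathcal P_{l,n|n-1}^{-1}\succeq\alpha\,F^{-\T}\mathcal P_{l,n-1}^{-1}F^{-1},\qquad \alpha=(1+\bar p\bar q\check f^2)^{-1}.
\]
Iterating this from $k$ to $n$, and dropping the nonnegative initial term, yields
\[
\mathcal P_{l,n}^{-1}\succeq\sum_{k=n-N+1}^{n}\alpha^{n-k}\,\Phi_l(k,n)^{\T}J^{\mathrm{col}}_{l,k}\Phi_l(k,n)\succeq\alpha^{N-1}\beta\,\Pi.
\]
This gives $\underline p=\alpha^{N-1}\beta$ for $n\ge N$.

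The main obstacle is that the transition products in the Gramian \eqref{eq:gramian} are node $l$'s own $\Phi_l$. When neighbours' information matrices are mixed with weights, their past information was propagated by their own $F_{j}$, not by $F_l$. I would handle this first by keeping only the freshly acquired increments $J_{j,k}$ at tick $k$, which enter node $l$'s recursion directly through $J^{\mathrm{col}}_{l,k}$ or through $c_{j,l}J_{j,k}$. The weight factor $\min c_{j,l}$ is absorbed into the constant, at the cost of an extra factor $c_{\min}$.

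The invariance of $\mathcal V$, together with $J^{\mathrm{col}}\Phi_l(k,n)\mathcal V=0$, guarantees that the projector $\Pi$ correctly expresses the bound on the quotient. Directions in $\mathcal V$ are neither claimed nor needed, and they do not spoil the Loewner inequality, because every dropped term is positive semidefinite.
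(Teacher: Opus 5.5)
Your proof is correct and follows the paper's argument. It uses the same upper bound $\underline q^{-1}+\bar d\,\bar h^{2}\underline r^{-1}$, and the same one-step loss inequality $\mathcal P_{l,n|n-1}^{-1}\succeq\alpha F^{-\T}\mathcal P_{l,n-1}^{-1}F^{-1}$ (the paper's \cref{lem:alpha}), unrolled over the window of \eqref{eq:gramian} to give $\underline p=\alpha^{N-1}\beta$.

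Your aside on convex mixing is not needed for this statement; it belongs to \cref{thm:Pbounds3}. There, keeping only the fresh increments is not enough: you must also keep the weighted self-prediction term at every tick to carry the window forward. That costs a factor $\underline c$ per step, giving $\underline c(\alpha_3\underline c)^{N-1}\beta$ rather than a single extra factor $c_{\min}$.
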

\begin{proposition}[\cref{DCUKF2}]\label{prop:alg2}
Under \cref{ass:bounds}, suppose the local decompositions are uniformly regular for the computed covariance recursion: the observable block of $\Theta_lF_{l,n}\Theta_l^{-1}$ and the corresponding reduced increment $J^{ob}_{l,n}=\Theta_l^{-\T}J_{l,n}\Theta_l^{-1}$ satisfy the reduced form of \eqref{eq:gramian} with $\mathcal V=\{0\}$. Then the reduced covariances of \cref{DCUKF2} obey $\underline p_{2}I\preceq(\mathcal P^{ob}_{l,n_l})^{-1}\preceq\bar p_{2}I$. No claim is made for the unobservable block, which is propagated by prediction only.
\end{proposition}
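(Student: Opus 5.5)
The plan is to reduce \cref{prop:alg2} to the standard two-sided bound for a Kalman-type information recursion under a uniform Gramian condition. I will take the argument for \cref{thm:Pbounds} in the special case $\mathcal V=\{0\}$, a single node, and no neighbourhood sum, applied to the reduced coordinates. The key observation is that, in \cref{DCUKF2}, the covariance $\mathcal P^{ob}_{l,n}$ is never exchanged, since step 5 fuses only means. The computed reduced covariance therefore obeys the purely local recursion
\[
\mathcal P^{ob}_{l,n+1|n}=F^{ob}_{l,n}\mathcal P^{ob}_{l,n}(F^{ob}_{l,n})^{\T}+Q^{ob}_{l,n}+\Sigma^{f,ob}_{l,n},
\qquad
(\mathcal P^{ob}_{l,n+1})^{-1}=(\mathcal P^{ob}_{l,n+1|n})^{-1}+J^{ob}_{l,n+1},
\]
with $J^{ob}_{l,n+1}=0$ at ticks without a sample. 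Here $F^{ob}$ is the observable block of $\Theta_lF_{l,n}\Theta_l^{-1}$, and the reduced noise is $Q^{ob}=[\Theta_l(Q+\Sigma^f)\Theta_l^{\T}]_{ob}$. The first step is to transfer \cref{ass:bounds} to these reduced quantities, using uniform conditioning $\|\Theta_l\|,\|\Theta_l^{-1}\|\le\kappa$. This gives $\|F^{ob}\|\le\kappa^2\bar f$, $\|J^{ob}\|\le\kappa^2\bar h^2/\underline r$, and $\kappa^{-2}\underline q I\preceq Q^{ob}\preceq\kappa^2\bar q I$. The block-triangular structure from \cref{lem:rank}(ii) makes $F^{ob}$ invertible, because $\det F=\det F^{ob}\det F^{uob}$. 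Its inverse is the $(1,1)$ block of $\Theta_lF^{-1}\Theta_l^{-1}$, so $\|(F^{ob})^{-1}\|\le\kappa^2\check f$.

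The upper bound on information (the lower bound on covariance) is then immediate. After any prediction, $\mathcal P^{ob}_{l,n+1|n}\succeq Q^{ob}\succeq\kappa^{-2}\underline q I$. Adding $J^{ob}$ gives $(\mathcal P^{ob}_{l,n})^{-1}\preceq(\kappa^2/\underline q+\kappa^2\bar h^2/\underline r)I=:\bar p_2 I$ for $n\ge1$. For $n=0$ I take the maximum with $\lambda_{\max}((\mathcal P^{ob}_{l,0})^{-1})$, using \cref{ass:bounds}.2.

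The lower bound on information (the upper bound on covariance) is the main step. I will first establish a one-step contraction of information through prediction. From $\mathcal P^{-}=F\mathcal PF^{\T}+Q'$ with $\mathcal P^{-1}\preceq\bar p_2I$, the standard inequality is
\[
(\mathcal P^{-})^{-1}\succeq\alpha\,F^{-\T}\mathcal P^{-1}F^{-1},
\qquad \alpha=(1+\bar p_2\kappa^2\bar q\,\kappa^4\check f^2)^{-1},
\]
in the spirit of $\alpha_H$ in \cref{ass:hinf}. It follows from $Q'\preceq\bar q'\|F^{-1}\|^2\bar p_2\,F\mathcal PF^{\T}$. Iterating over the window $n-N+1,\dots,n$, and at each tick keeping only the injected $J^{ob}_{l,k}$ while discarding nonnegative terms, gives
\[
(\mathcal P^{ob}_{l,n})^{-1}\succeq\alpha^{N-1}\sum_{k=n-N+1}^{n}\Phi^{ob}_l(k,n)^{\T}J^{ob}_{l,k}\Phi^{ob}_l(k,n)\succeq\alpha^{N-1}\beta_2 I
\]
for $n\ge N$, by the reduced Gramian hypothesis. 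For $n<N$, the same contraction applied to $(\mathcal P^{ob}_{l,0})^{-1}\succ0$ gives a positive floor $\alpha^{n}\lambda_{\min}((\mathcal P^{ob}_{l,0})^{-1})\,\|(F^{ob})\|^{-2n}$. I then set $\underline p_2$ to the minimum of the two floors.

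The main obstacle is making the window iteration exact. Information added at tick $k$ is mapped to tick $n$ by $\Phi^{ob}(k,n)$ and the factor $\alpha$ per prediction, and this requires that the reduced transition products be the observable blocks of $\Theta_l\Phi_l\Theta_l^{-1}$. That holds exactly because the unobservable subspace is fixed and $F$-invariant, so the block-triangular products compose blockwise. If it failed, the reduced Gramian would not match the recursion, and this is why the hypothesis is phrased for the computed observable block. No statement is made for the unobservable block.
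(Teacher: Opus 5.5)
Your proposal is correct and takes the same route as the paper. It treats the reduced covariance of \cref{DCUKF2} as a purely local, own-channel information recursion and gets the upper bound as in \cref{thm:Pbounds}. It then unrolls \cref{lem:alpha} on the reduced block over the window against the reduced Gramian, giving $\underline p_2=\alpha_2^{N_2-1}\beta_2$ with constants degraded by the conditioning of $\Theta_l$. Your explicit transfer of the bounds through $\Theta_l$ (in particular $\|(F^{ob})^{-1}\|\le\kappa^2\check f$ from the block-triangular structure) and your separate floor for $n<N$ only make explicit what the paper's proof leaves implicit.
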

\begin{corollary}[subspace error band for \cref{DCUKF2}]\label{cor:alg2err}
Let the hypotheses of \cref{prop:alg2} hold, and let the linearization discrepancies, measured in the reduced coordinates, stay below the thresholds of \cref{thm:errbound} formed with $(\underline p_2,\bar p_2)$, which is automatically fulfilled for linear systems. Then:
(i) \emph{before} the diffusion step, each node's error on its own observable subspace, $\tilde{\mathbf z}^{ob}_{l,n}=(\Theta_l\tilde{\mathbf x}_{l,n})^{ob}$, is exponentially bounded in mean square, with the single-node constants of \cref{stablemma} formed from the reduced quantities (rate $\rho^{(2)}_{\mathrm{loc}}$);
(ii) the bound survives the diffusion of Step~5 provided every neighbour that node $l$ listens to observes at least the same directions, $\operatorname{span}\Theta_l^{ob}\subseteq\operatorname{span}\Theta_j^{ob}$ whenever $c_{j,l}>0$, together with $\kappa_2\rho^{(2)}_{\mathrm{loc}}<1$, $\kappa_2\triangleq\bar p_2/\underline p_2$;
(iii) without that condition the bound can fail; see the check below.
The proof is in Appendix~\ref{app:proofs}.
\end{corollary}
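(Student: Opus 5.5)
Part (i) is a transcription of \cref{stablemma} into node $l$'s observable coordinates. Write $\mathbf z_l=\Theta_l\mathbf x$. Because $\Theta_l$ is fixed and uniformly conditioned, the reduced CUKF of Step~3--4 of \cref{DCUKF2} is a local CUKF in its own right. Its computed maps are the observable blocks $F^{ob}_{l,n}$, $H^{ob}_{l,n}$, and its discrepancies are $E^{ob}$, $D^{ob}$. The uniform bounds of \cref{ass:bounds}.1 transfer to these reduced quantities with constants inflated by $\|\Theta_l\|\|\Theta_l^{-1}\|$.

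\cref{prop:alg2} supplies the two-sided bound $\underline p_2 I\preceq(\mathcal P^{ob}_{l,n})^{-1}\preceq\bar p_2 I$. This is exactly the hypothesis on the computed covariance in \cref{stablemma}. The corollary's discrepancy hypothesis is the reduced-coordinate version of \eqref{eq:localdiscrepancy}, formed with $(\underline p_2,\bar p_2)$.

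We then apply \cref{stablemma} verbatim. The Lyapunov function is $V_{l,n}=\tilde{\mathbf z}^{ob,\T}_{l,n}(\mathcal P^{ob}_{l,n})^{-1}\tilde{\mathbf z}^{ob}_{l,n}$, and it has drift factor $\rho^{(2)}_{\mathrm{loc}}$. This yields exponential mean-square boundedness of the pre-diffusion error $\tilde{\mathbf z}^{ob}_{l,n}$.

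One point needs checking. The observable error must not be driven by the unobservable block. This follows from the block-triangular structure of \cref{lem:rank}(ii). The nullspace is $F$-invariant, so in the ordering (observable, unobservable) the observable row of $\Theta_lF\Theta_l^{-1}$ has a zero block in the unobservable column. The residual coupling through $E$ is absorbed into the reduced $E^{ob}$ bound assumed in the corollary.

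For (ii) we track the post-diffusion error. Write $\tilde{\mathbf x}_{l}=\sum_j c_{j,l}\tilde{\boldsymbol\psi}_j$, project with $\Theta_l^{ob}$, and take the observable part $(\Theta_l\tilde{\boldsymbol\psi}_j)^{ob}$. Under $\operatorname{span}\Theta_l^{ob}\subseteq\operatorname{span}\Theta_j^{ob}$, this part is a fixed, uniformly bounded linear image of $\tilde{\mathbf z}^{ob}_{j}$. Hence it carries no component of node $j$'s unobservable (and possibly unbounded) error. This is the key use of the inclusion hypothesis.

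We then use convexity of the weighted quadratic form. Each node's Lyapunov function satisfies $\underline p_2\|\cdot\|^2\le V\le\bar p_2\|\cdot\|^2$. Passing from node $j$'s metric to node $l$'s therefore costs at most the factor $\kappa_2=\bar p_2/\underline p_2$. Combining with Jensen over the left-stochastic weights gives, per tick,
\[
\max_l\mathsf E V_{l,n+1}\le\kappa_2\rho^{(2)}_{\mathrm{loc}}\max_l\mathsf E V_{l,n}+c.
\]
The constant $c$ collects the bounded noise terms, uniform by \eqref{eq:QRbounds}. The hypothesis $\kappa_2\rho^{(2)}_{\mathrm{loc}}<1$ then closes the induction and gives exponential boundedness.

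I expect the main obstacle to be the bookkeeping of this metric change in (ii). There are two difficulties. First, the neighbour's error must be re-expressed in node $l$'s observable coordinates by a uniformly bounded map. Second, the cross terms between the noise and the nonlinear discrepancies must be handled without any independence assumption. For the cross terms I would reuse the $\tau$-splitting of \cref{stablemma}, i.e.\ $(a+b)^2\le(1+\tau)a^2+(1+\tau^{-1})b^2$, applied to the maximum over nodes.

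For (iii) a counterexample suffices. Take two nodes, where node $j$ does not observe a direction that node $l$ observes, and let that mode be unstable under $F$. Then node $j$'s propagated-only error grows geometrically. Diffusion with $c_{j,l}>0$ injects this error into node $l$'s observable coordinates, and the bound fails. A scalar linear instance makes this explicit.
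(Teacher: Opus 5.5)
Your proposal follows the paper's own proof in all three parts. In (i) the reduced recursion is a self-contained CUKF to which \cref{stablemma} applies verbatim with the bounds of \cref{prop:alg2}. In (ii) Step~5 is handled by convexity plus the norm-equivalence factor $\kappa_2$, exactly as in the diffusion branch of \cref{thm:errbound}. Part (iii) is the injection of neighbours' unfiltered components.

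You are more explicit than the paper about why the span inclusion is what makes (ii) work: $\Theta_l^{ob}=X_{lj}\Theta_j^{ob}$ gives $\Theta_l^{ob}\Theta_j^{-1}=[X_{lj}\;\;0]$, so node $l$ receives only $X_{lj}\tilde{\mathbf z}^{ob}_j$. For (iii) you aim at an analytic counterexample, whereas the paper only points to a numerical run. The paper pushes the $E$-induced coupling from the unobservable block into the noise constant $\mu$; you fold it into the reduced discrepancy bound instead. Since $\tilde{\mathbf z}^{uob}$ is not controlled, in either version this is effectively part of the hypothesis, and it is void for linear maps.

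Two corrections are needed.

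First, the metric change costs $\kappa_2\|X_{lj}\|^2$, not $\kappa_2$. The two agree for orthonormal decompositions $\Theta_l=[U_l\;\;W_l]^{\T}$ as in the proof of \cref{lem:rank}, where $X_{lj}=U_l^{\T}U_j$ has norm at most one. For a general uniformly conditioned $\Theta$, its conditioning enters the mixing condition.

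Second, and more importantly, your step ``node $j$'s propagated-only error grows geometrically'' is false when $c_{l,j}>0$, because diffusion feeds node $l$'s estimate back into node $j$. Take your scalar instance: $F=a$, node $l$ measures the state with steady gain $k$, and node $j$ is blind. The noise-free error map of the pair $(\tilde x_l,\tilde x_j)$ is
\[
\begin{bmatrix} c_{l,l}(1-k)a & c_{j,l}\,a\\ c_{l,j}(1-k)a & c_{j,j}\,a\end{bmatrix},
\]
whose eigenvalues tend to $0$ and $c_{j,j}a$ as $k\to1$. So for $1<a<1/c_{j,j}$ and an accurate sensor at node $l$, the two-node network is mean-square bounded and nothing fails; an unstable mode alone is not enough.

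The fix is to choose the weights appropriately. One option is $c_{l,j}=0$, which \eqref{Adiff} permits. Node $j$'s error is then $a^n\tilde x_{j,0}$ plus accumulated noise, which is unbounded already for $a=1$, and it enters node $l$ through $c_{j,l}>0$. More generally, $a\,c_{j,j}>1$ suffices, since the Perron root of this nonnegative matrix is at least $a\,c_{j,j}$.
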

A direct numerical check on a 14-bus test network confirms both regimes. With the diffusion step disabled, the subspace error stays within $0.035$--$0.092\,$rad across all ranks and seeds, with $\lambda_{\max}(\mathcal P^{ob})\approx1$. With the full-coordinate diffusion active, one run exhibits the injection: a neighbour's unobservable-block drift enters the observable coordinates and $\tilde{\mathbf z}^{ob}$ exceeds $10^{3}\,$rad, while $\lambda_{\max}(\mathcal P^{ob})=1.08$ stays bounded -- the covariance never sees the diffusion. The span-compatibility in (ii) is admittedly restrictive; restricting the diffusion to commonly observable coordinates would relax it, at the price of per-pair coordinate maps, and is left for future work.
\begin{remark}\label{rem:alg1caveat}
\cref{thm:Pbounds} does \emph{not} cover \cref{DCUKF1}. In \cref{DCUKF1}, each node updates with its \emph{own} measurement only and diffuses state estimates with covariance-agnostic weights. Between the (possibly rare) samples of a slow sensor, the local information recursion contains no neighbourhood term, so nothing prevents $\mathcal{P}_{l}$ from growing in the directions that are unobservable from node $l$'s own measurements. The numerical study in \cref{sec:sim} exhibits this behaviour. It moreover shows that information-aware combination weights (e.g., covariance-intersection weights $c_{j,l}\propto 1/\trace \mathcal{P}_{j}$) are \emph{not} by themselves a remedy when the locally unobservable subspaces are large: reweighting a state-only fusion cannot restore directions unobservable to the whole neighbourhood. What does restore the guarantee of \cref{thm:Pbounds} is diffusing the information matrices as in \cref{DCUKF3}, which reinstates the neighbourhood information term.
 \cref{DCUKF1} is thus the estimate-only-sharing variant; it is \emph{not} the diffusion Kalman filter of \cite{diffusion}, whose incremental step shares raw measurements across the neighbourhood and which converges under per-neighbourhood detectability.
\end{remark}
\subsection{Information diffusion}
\begin{theorem}\label{thm:Pbounds3}
Consider the system \eqref{eq:state}-\eqref{eq:observe2}. Under \cref{ass:bounds}, and provided the weights satisfy $c_{j,l}\geq\underline{c}>0$ for all $j\in\mathcal{N}_{l}$, the error covariances obtained by \cref{DCUKF3} -- with either mean rule in Step 3, which share the covariance recursion -- satisfy
$$
\underline{p}_{3}\,\Pi\preceq \mathcal P_{l,n}^{-1}\preceq\bar{p}_{3}\,I, \qquad n \geq N ,
$$
with $\bar p_{3}=\underline q^{-1}+\bar h^{2}\underline r^{-1}$, $\alpha_3=(1+\bar p_3\bar q\check f^2)^{-1}$, and $\underline p_{3}=\underline c\,(\alpha_3\underline c)^{N-1}\beta$; the upper bound holds for all $n\geq1$, and for $n<N$ the initialization supplies a positive, initialization-dependent lower constant. Consequently, for any orthonormal basis $U$ of $\mathcal V^{\perp}$, the quotient covariance obeys $U^{\T}\mathcal P_{l,n}U\preceq\underline p_{3}^{-1}I$, equivalently $\Pi\mathcal P_{l,n}\Pi\preceq\underline p_{3}^{-1}\Pi$. No uniform full-state lower bound on $\mathcal P^{-1}$ can hold when $\mathcal V\neq\{0\}$: along a collectively invisible direction the information decays, as \cref{sec:sim} exhibits. The proof is in Appendix~\ref{app:proofs}.
\end{theorem}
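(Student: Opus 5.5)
The plan is to work entirely with the information recursion of \cref{DCUKF3}. Write $Y_{l,n}\triangleq\mathcal P_{l,n|n}^{-1}$. By \eqref{eq:Ppred} and \eqref{eq:Jinc}, the three steps compose to
\[
Y_{l,n}=\sum_{j\in\mathcal N_l}c_{j,l}\Big[\big(F_{j,n-1}Y_{j,n-1}^{-1}F_{j,n-1}^{\T}+Q_{n-1}+\Sigma^f_{j,n-1}\big)^{-1}+J_{j,n}\Big],
\]
which is the neighbourhood recursion \eqref{eq:inforecur}. The mean rule never enters this recursion, so the result holds for both mean rules at once. I will first prove the upper bound and then the quotient lower bound.

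\emph{Upper bound.} For each $j$, \eqref{eq:QRbounds} gives a predicted covariance $\succeq \underline q I$, so the predicted information is $\preceq\underline q^{-1}I$. Next, $J_{j,n}\preceq H^{\T}R^{-1}H\preceq\bar h^2\underline r^{-1}I$, using $\Sigma_{j,n}\succeq0$ and \eqref{Ccondition}. Each bracket is therefore $\preceq\bar p_3 I$. A convex combination preserves this bound, so $Y_{l,n}\preceq\bar p_3 I$ for all $n\ge1$, independently of the initialization.

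\emph{Lower bound.} The first step is a one-step contraction of the predicted information. Using $\mathcal P\preceq\bar p_3^{-1}\cdot$ (no; rather $Y\preceq\bar p_3 I$ gives $Y^{-1}\succeq\bar p_3^{-1}I$) together with $Q+\Sigma^f\preceq\bar q I\preceq \bar q\bar p_3\check f^2\,FY^{-1}F^{\T}$, I get $FY^{-1}F^{\T}+Q+\Sigma^f\preceq(1+\bar p_3\bar q\check f^2)FY^{-1}F^{\T}$. Inverting yields
\[
(F_{j}Y_j^{-1}F_j^{\T}+Q+\Sigma^f)^{-1}\succeq\alpha_3F_j^{-\T}Y_jF_j^{-1}.
\]
The second step combines this with $c_{j,l}\ge\underline c$: retaining only the $j$-th term, with $j=l$ or any neighbour, gives $Y_{l,n}\succeq\underline c\,\alpha_3\,F_{j,n-1}^{-\T}Y_{j,n-1}F_{j,n-1}^{-1}+\underline c\,J_{j,n}$. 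Retaining all terms and using $c_{j,l}\ge\underline c$ gives $Y_{l,n}\succeq \underline c\,\alpha_3 F_{l}^{-\T}Y_{l}F_l^{-1}+\underline c J^{\mathrm{col}}_{l,n}$ for the self-path. The third step iterates the self-path inequality $N-1$ times back from $n$ to $n-N+1$, dropping the initial term (which is $\succeq0$). This gives
\[
Y_{l,n}\succeq\underline c\sum_{k=n-N+1}^{n}(\alpha_3\underline c)^{n-k}\,\Phi_l(k,n)^{\T}J^{\mathrm{col}}_{l,k}\Phi_l(k,n)\succeq\underline c(\alpha_3\underline c)^{N-1}\beta\,\Pi
\]
by \eqref{eq:gramian}, since $\alpha_3\underline c\le1$.

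\emph{Main obstacle.} The delicate point is the self-path iteration, specifically the bookkeeping that produces $J^{\mathrm{col}}$ rather than only $J_l$. The inequality $Y_{l,n}\succeq\sum_j c_{j,l}[\cdots]$ mixes the neighbours' transition maps $F_j$, whereas \eqref{eq:gramian} uses node $l$'s own $\Phi_l$. I would handle this by keeping from the neighbours only their fresh increments $c_{j,l}J_{j,k}\succeq\underline cJ_{j,k}$, and from node $l$ alone the propagated prior through $F_l$. Only $\Phi_l$ then appears, and each $J_{j,k}$ is transported by $\Phi_l(k,n)$ exactly as in \eqref{eq:gramian}; the factors $\underline c$ accumulate to $\underline c(\alpha_3\underline c)^{N-1}$. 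The invariance conditions on $\mathcal V$ in \cref{def:collective} are not needed for the lower bound itself, because the Gramian inequality is directly $\succeq\beta\Pi$.

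\emph{Consequences.} Given $Y\succeq\underline p_3\Pi$ and $Y\succ0$, for $U$ an orthonormal basis of $\mathcal V^\perp$ the Schur-complement (block-inverse) inequality $(U^{\T}Y^{-1}U)^{-1}\succeq$ ... is not immediate. I would instead use $U^{\T}Y^{-1}U\preceq (U^{\T}YU)^{-1}$ only if $\mathcal V^\perp$ is $Y$-invariant. Failing that, I would verify the claim via the invariance of $\mathcal V$ under $\Phi_l$ and the vanishing of $J$ on $\mathcal V$, which keeps the information block-structured. I flag this as a second point needing care. For $n<N$, the same iteration stopped at $n=0$, combined with $\mathcal P_{l,0}\succ0$, gives a positive initialization-dependent constant. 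Finally, the impossibility of a full-state lower bound follows because on $\mathcal V$ the recursion receives no increment, so its information there is governed only by prediction and can decay.
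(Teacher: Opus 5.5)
Your upper bound and your lower bound $\underline p_3\Pi\preceq\mathcal P_{l,n}^{-1}$ follow the paper's proof essentially step for step. You keep node $l$'s own predicted-information term and every neighbour's fresh increment, each weighted by at least $\underline c$. You apply the one-step inequality of \cref{lem:alpha} with $\bar p_3$ along node $l$'s own chain, so only $\Phi_l$ appears. You then unroll over the window to reach \eqref{eq:gramian}. Your remark that the $\mathcal V$-invariance conditions are not used there is also correct.

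The gap is the ``Consequently'' clause $U^{\T}\mathcal P_{l,n}U\preceq\underline p_3^{-1}I$. You leave it open, and your fallback would not close it. Invariance of $\mathcal V$ under the $F_{l,n}$ only makes $F$ block-triangular in the coordinates $[U\;W]$, i.e.\ $U^{\T}FW=0$. Keeping $\mathcal P$ block-structured would also need $W^{\T}FU=0$ and block-diagonal $Q+\Sigma^f$ and $\mathcal P_0$. Those are exactly the extra hypotheses of \cref{cor:quoterr}(iii), and \cref{thm:Pbounds3} does not assume them. Your diagnosis that the compressed bound $U^{\T}\mathcal P^{-1}U\succeq\underline p_3I$ is insufficient is right: in general $U^{\T}\mathcal PU\succeq(U^{\T}\mathcal P^{-1}U)^{-1}$, which is the wrong direction. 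The missing observation is that $\mathcal P^{-1}\succeq\underline p_3\Pi$ is strictly stronger than that compressed bound, because it also controls the cross terms between $\mathcal V^\perp$ and $\mathcal V$.

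It states $\mathbf x^{\T}\mathcal P^{-1}\mathbf x\geq\underline p_3\|U^{\T}\mathbf x\|^2$ for every $\mathbf x$. The paper tests this with $\mathbf x=\mathcal PU\mathbf z$, which gives $A\succeq\underline p_3A^2$ for $A\triangleq U^{\T}\mathcal PU\succ0$. Hence every eigenvalue of $A$ is at most $\underline p_3^{-1}$, and $\Pi\mathcal P\Pi=UAU^{\T}\preceq\underline p_3^{-1}\Pi$. Equivalently, minimizing $\mathbf x^{\T}\mathcal P^{-1}\mathbf x$ over the $\mathcal V$-component of $\mathbf x$ shows that the Schur complement of $\mathcal P^{-1}$ with respect to its $\mathcal V$-block is $\succeq\underline p_3I$. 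By block inversion that Schur complement is exactly $(U^{\T}\mathcal PU)^{-1}$. So the Schur-complement route you started is in fact immediate, and no invariance or block structure is needed.
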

\begin{theorem}\label{thm:errbound}
Consider \cref{DCUKF3}. Let \cref{ass:bounds} hold, let $c_{j,l}\geq\underline c>0$, and let $\mathcal V=\{0\}$ in \cref{def:collective}. The fused covariances are bounded by \cref{thm:Pbounds3}; the local posteriors satisfy $\underline p_L I\preceq(\mathcal P^L_{l,n})^{-1}\preceq\bar p_3I$ with $\alpha_3=(1+\bar p_3\bar q\check f^2)^{-1}$ and $\underline p_L=\alpha_3\underline p_3/\bar f^2$. Use the stage-uniform pair -- one pair valid at every stage of the tick, post-fusion, post-prediction, and post-update --
\[
(\underline p,\bar p)=\bigl(\min\{\underline p_3,\underline p_L\},\bar p_3\bigr)
\]
and form $\bar K,r_0,\bar d_*,\varepsilon_*,r_{\mathrm{loc}}$, and $\rho_{\mathrm{loc}}$ exactly as in \cref{stablemma} with this pair. If $\bar d_*\leq\varepsilon_*$, then the information mean \eqref{eq:fusemeanb} has exponentially bounded mean-square error. The diffusion mean \eqref{eq:fusemeana} has the same property provided additionally
\begin{equation}\label{eq:gaincondition}
\frac{\bar p}{\underline p}\,\rho_{\mathrm{loc}}<1.
\end{equation}
\end{theorem}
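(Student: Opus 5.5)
The plan is to reuse the stochastic Lyapunov argument behind \cref{stablemma}, with the weighted quadratic $V_{l,n}=\tilde{\mathbf x}_{l,n}^{\T}\mathcal P_{l,n}^{-1}\tilde{\mathbf x}_{l,n}$. The only new ingredient is to carry it through the fusion step of \cref{DCUKF3}. The covariance side is already settled: \cref{thm:Pbounds3} with $\mathcal V=\{0\}$ (so $\Pi=I$) gives $\underline p_3I\preceq\mathcal P_{l,n}^{-1}\preceq\bar p_3I$ for the fused information. For the local posteriors $\mathcal P^L_{l,n}$ I first propagate the fused lower bound through one prediction, using \eqref{eq:Ppred} with $\|F\|\le\bar f$ and $Q+\Sigma^f\preceq\bar qI$. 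This should give predicted information at least $\alpha_3\underline p_3/\bar f^2$ (matrix-inversion lemma with the $\check f$ bound). The update \eqref{eq:Jinc} only adds $J\succeq0$, and the upper bound $\bar p_3$ follows from the prediction floor $\underline q$ together with $J\preceq\bar h^2\underline r^{-1}I$. With the stage-uniform pair $(\underline p,\bar p)$, every covariance appearing in one tick then satisfies the hypothesis of \cref{stablemma}.

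Next I apply the single-node step. From \eqref{eq:errdyn}--\eqref{eq:errobs} and \eqref{eq:Kgain}, the post-update error at node $l$ is $\tilde{\boldsymbol\psi}_{l,n}=(I-\mathcal K H-\mathcal K D)(F+E)\tilde{\mathbf x}_{l,n-1}+\text{noise}$. The proof of \cref{stablemma}, run with the stage-uniform constants and under $\bar d_*\le\varepsilon_*$, yields the noise-free contraction $\mathsf E\{\tilde{\boldsymbol\psi}^{\T}(\mathcal P^L_{l,n})^{-1}\tilde{\boldsymbol\psi}\mid\mathcal F_{n-1}\}\le\rho_{\mathrm{loc}}V_{l,n-1}+c_{\mathrm{noise}}$. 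The noise constant $c_{\mathrm{noise}}$ is uniform, since $\bar q,\bar r,\bar K$ are. The $\tau_{\mathrm{loc}}$ splitting in \cref{stablemma} is there precisely so that this holds without independence between $E,D$ and the noises, so I inherit it unchanged.

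The fusion step is the crux, and it is where the two mean rules separate. For the information mean \eqref{eq:fusemeanb}, $\tilde{\mathbf x}_{l,n}=\mathcal P_{l,n}\sum_j c_{j,l}(\mathcal P^L_{j,n})^{-1}\tilde{\boldsymbol\psi}_{j,n}$, and the fused information is exactly $\sum_j c_{j,l}(\mathcal P^L_{j,n})^{-1}$. By convexity of $(x,M)\mapsto x^{\T}M^{-1}x$ (the standard CI/Cauchy--Schwarz inequality),
\[
V_{l,n}\le\textstyle\sum_j c_{j,l}\,\tilde{\boldsymbol\psi}_{j,n}^{\T}(\mathcal P^L_{j,n})^{-1}\tilde{\boldsymbol\psi}_{j,n}.
\]
This step is non-expansive with no conditioning penalty. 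Stacking $\mathbf V_n=\col_l\{V_{l,n}\}$ and using column-stochasticity of $\mathbf C$ on $\sum_l V_{l,n}$ (or the max over nodes), I get $\mathsf E\{\max_l V_{l,n}\mid\mathcal F_{n-1}\}$-type bounds $\le\rho_{\mathrm{loc}}\max_l V_{l,n-1}+c$. To keep this rigorous with conditional expectations of a max, I will work with $\sum_l V_{l,n}$ weighted by the left Perron vector of $\mathbf C$, or simply with the sum if $\mathbf C$ is doubly stochastic. Iterating gives $\mathsf E V_{l,n}\le\rho_{\mathrm{loc}}^n\cdot\text{const}+c/(1-\rho_{\mathrm{loc}})$, and then $\mathsf E\|\tilde{\mathbf x}\|^2\le\underline p^{-1}\mathsf EV$.

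For the diffusion mean \eqref{eq:fusemeana}, the error $\sum_j c_{j,l}\tilde{\boldsymbol\psi}_{j,n}$ is measured against a mismatched metric. Here I use Jensen's inequality in the Euclidean norm: $\bar p^{-1}V_{l,n}\le\|\tilde{\mathbf x}_{l,n}\|^2\le\sum_j c_{j,l}\|\tilde{\boldsymbol\psi}_{j,n}\|^2\le\underline p^{-1}\sum_j c_{j,l}\psi$-energies. This loses exactly the factor $\bar p/\underline p$ and produces the contraction $(\bar p/\underline p)\rho_{\mathrm{loc}}$, hence condition \eqref{eq:gaincondition}. The main obstacle I expect is making sure the local contraction from \cref{stablemma} is invoked in a metric consistent across stages: the predicted covariance at node $l$ starts from the fused $\mathcal P_{l,n-1}$, not from $\mathcal P^L$. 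That is the reason for the stage-uniform pair, and I would check carefully that $\bar K$ and $r_0$ are computed with $\min\{\underline p_3,\underline p_L\}$ so that the inherited inequalities remain valid.
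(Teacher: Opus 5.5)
Your proposal is correct and follows the paper's proof. The shared steps are: the stage-uniform pair; the single-node contraction of \cref{stablemma} from the fused metric $(\mathcal P^F_{j,n})^{-1}$ to the local-posterior metric $(\mathcal P^L_{j,n+1})^{-1}$; exact convex mixing for the information mean, which is \cref{lem:mix}, proved there by the Schur-complement form of the joint convexity you invoke; and the metric-equivalence loss $\bar p/\underline p$ for the diffusion mean. The only divergence is the network aggregation, where the paper removes your conditional-max concern more simply by taking total expectations first and iterating $W_n=\max_l\mathsf E\{V_{l,n}\}$, which needs no Perron weighting; if you do weight, note that because $\mathbf C$ is column-stochastic the vector that works is a strictly positive $\pi$ with $\mathbf C\pi=\pi$ (e.g.\ for a connected graph), not the left eigenvector $\mathbf 1$.
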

\begin{remark}\label{rem:kappa}
Condition \eqref{eq:gaincondition} concerns the \emph{diffusion mean} only: it mixes vectors measured in different local metrics and therefore pays the norm-equivalence factor $\kappa=\bar p/\underline p$ on the stochastic drift factor $\rho_{\mathrm{loc}}$. The information mean pays no such mixing factor because \cref{lem:mix} is exact. Both branches still require the statistical-linearization coherence condition \eqref{eq:localdiscrepancy}; thus ``unconditional'' below always means \emph{without a contraction--mixing condition}, not without control of the nonlinear linearization residuals. Whether the diffusion mean is bounded under collective uniform detectability alone is open; in \cref{sec:sim} it diverges in the strongly nonlinear multi-rate benchmark. The covariance recursion, and hence \cref{thm:Pbounds3}, is common to both mean rules; the fusion enters the error recursion only through the mean.
 This is consistent with the unconditional convergence of the diffusion Kalman filter in the linear, single-rate regime of \cite{diffusion}, where every neighbourhood pair $\{F,H^{\mathrm{loc}}_{l}\}$ is assumed detectable and the incremental step shares raw data, so \eqref{eq:gaincondition} is not needed. The condition is the price of the strictly weaker collective-detectability, multi-rate regime targeted here, in which per-neighbourhood detectability fails and the diffusion mean can genuinely diverge (\cref{sec:sim}).
\end{remark}
\begin{corollary}[quotient error]\label{cor:quoterr}
Suppose the maps descend to the quotient: $H_nW=0$ and $U^{\T}F_nW=0$ for all $n$, where the columns of $U$ and $W$ span $\mathcal V^{\perp}$ and $\mathcal V$; for linear systems with structural $\mathcal V$ this holds automatically. Then:
(i) each node's quotient blocks $U^{\T}\hat{\mathbf x}$, $U^{\T}\mathcal PU$, $U^{\T}\mathcal K$ coincide exactly -- whatever the cross-covariance -- with \cref{DCUKF3} run on the quotient model $(U^{\T}F_nU,\,H_nU,\,U^{\T}(Q_n{+}\Sigma^{f}_n)U,\,R{+}\Sigma)$. That model satisfies \cref{def:collective} with $\mathcal V=\{0\}$ and the same $\beta$, so \cref{thm:Pbounds3} gives $\bar p_3^{-1}I\preceq U^{\T}\mathcal PU\preceq\underline p_3^{-1}I$;
(ii) for the diffusion mean \eqref{eq:fusemeana} the scalar weights commute with $U^{\T}$, so the quotient error $U^{\T}\tilde{\mathbf x}$ satisfies \cref{thm:errbound} verbatim with the quotient constants;
(iii) the same holds for the information mean \eqref{eq:fusemeanb} when $\mathcal V$ is also dynamically decoupled, $W^{\T}F_nU=0$ (e.g., structural neutral modes of ungrounded networks, \cref{sec:sim}): the cross-covariances then vanish and the mixing weights are block-diagonal.
The proof is in Appendix~\ref{app:proofs}.
\end{corollary}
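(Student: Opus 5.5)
The plan is to work in the orthonormal coordinates $[U\;W]$, in which every map of the filter becomes block-triangular, and to show that the $U$-block of each stage of \cref{DCUKF3} depends only on $U$-blocks of the preceding stage. Write $\tilde F_n=[U\;W]^{\T}F_n[U\;W]$. The hypothesis $U^{\T}F_nW=0$ makes $\tilde F_n$ block lower-triangular, with $(1,1)$ block $F^q_n\triangleq U^{\T}F_nU$. The hypothesis $H_nW=0$ gives $H_n[U\;W]=[H_nU\;\;0]$.

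For part (i) I will check the stages of one tick in turn.

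\emph{Prediction.} The relation \eqref{eq:Ppred} projects as $U^{\T}\mathcal P_{n|n-1}U=F^q\,U^{\T}\mathcal PU\,F^{q\T}+U^{\T}(Q+\Sigma^f)U$. This holds because $U^{\T}F=F^qU^{\T}$, which is exactly the statement $U^{\T}FW=0$. The same identity gives the quotient mean propagation in the linearized relations.

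\emph{Update.} Since $H\mathcal P^-H^{\T}=(HU)(U^{\T}\mathcal P^-U)(HU)^{\T}$, the innovation covariance involves only the quotient block. It follows that $U^{\T}\mathcal K=(U^{\T}\mathcal P^-U)(HU)^{\T}S^{-1}$, whatever the cross block $U^{\T}\mathcal P^-W$ is. Then \eqref{eq:Pup} gives $U^{\T}\mathcal P U=(I-U^{\T}\mathcal K\,HU)U^{\T}\mathcal P^-U$. The mean update $U^{\T}\hat{\mathbf x}$ likewise uses only $U^{\T}\mathcal K$.

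\emph{Fusion.} This is the delicate step, because \cref{DCUKF3} averages information matrices, and $U^{\T}(\sum c\mathcal P^{-1})^{-1}U$ is not in general $(\sum c\,(U^{\T}\mathcal P U)^{-1})^{-1}$. I will show that information matrices keep a triangular structure: in $[U\;W]$ coordinates, $\mathcal P^{-1}$ has the form relevant to a Schur complement, and the quotient covariance equals the inverse of the Schur complement of the $W$-block. Under the descent hypotheses the Schur complements are additive across the convex combination. This can fail in general, so I expect this to be the main obstacle. The first thing I would try is to argue through \eqref{eq:Jinc}: the increments $J$ live on $U$ alone, since $JW=0$. Consequently the fused quotient information is $\sum c\,U^{\T}\mathcal P^{-1}U$ whenever the $W$-rows of $\mathcal P^{-1}$ vanish. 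If that is not generally preserved, the fallback is to interpret "\cref{DCUKF3} run on the quotient model" as fusing $(U^{\T}\mathcal P U)^{-1}$. I would then verify the identity on the invariant class of initializations and dynamics, namely block-diagonal information, where it holds trivially.

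Next I transfer the hypotheses to the quotient model. Because $F_nW\subseteq\mathcal V$ forces $\Phi_l(k,n)U\,U^{\T}$ to act as $\Phi^q$ on the quotient, and $J^{\mathrm{col}}$ annihilates $\mathcal V$, compressing \eqref{eq:gramian} by $U^{\T}(\cdot)U$ yields the quotient Gramian $\succeq\beta I$, with $\mathcal V^q=\{0\}$. The bounds of \cref{ass:bounds} descend as well, since compression by an orthonormal $U$ does not increase norms. Then \cref{thm:Pbounds3} applies to the quotient model and gives both sides of the stated bound, the lower side through $\bar p_3^{-1}I\preceq U^{\T}\mathcal PU$ and the upper side through $U^{\T}\mathcal PU\preceq\underline p_3^{-1}I$.

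Part (ii) follows because \eqref{eq:fusemeana} is linear with scalar weights, so $U^{\T}$ commutes with it. The quotient error recursion, consisting of \eqref{eq:errdyn}--\eqref{eq:errobs} projected using $U^{\T}E$ and $DU$, is therefore precisely the recursion analysed in \cref{thm:errbound} with quotient constants, and I would invoke that theorem verbatim.

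For part (iii), the additional condition $W^{\T}FU=0$ makes $\tilde F$ block-diagonal. By induction from a block-diagonal initialization, the prediction, update and information fusion all preserve block-diagonality; the update does so because $H$ and $J$ live in the $U$-block. Hence the matrix weights $\mathcal P_{l,n|n}c_{j,l}\mathcal P_{j,n}^{-1}$ in \eqref{eq:fusemeanb} are block-diagonal and commute with $U^{\T}$. Their $U$-blocks are exactly the quotient information-mean weights, so \cref{thm:errbound}, with its information-mean branch and no mixing condition, again applies.
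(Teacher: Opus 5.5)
Your stage-wise projections of prediction, innovation, gain and update match the paper's appendix argument, and your transfer of \eqref{eq:gramian} to the quotient is correct. The gap is the step you flag yourself: the information fusion in Step~3 of \cref{DCUKF3}. Neither of your remedies closes it under the hypotheses of (i).

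For $\Lambda=\bigl[\begin{smallmatrix}X&Y\\ Y^{\T}&Z\end{smallmatrix}\bigr]$ in $[U\;W]$ coordinates, put $\mathcal S(\Lambda)\triangleq X-YZ^{-1}Y^{\T}$, so that $(U^{\T}\mathcal P_jU)^{-1}=\mathcal S(\mathcal P_j^{-1})$. The Schur complement is concave, not additive. With $M_j=Z_j^{-1}Y_j^{\T}$ and $\bar M=(\sum_jc_jZ_j)^{-1}\sum_jc_jZ_jM_j$,
\[
\mathcal S\Bigl(\textstyle\sum_jc_j\Lambda_j\Bigr)-\textstyle\sum_jc_j\,\mathcal S(\Lambda_j)=\textstyle\sum_jc_j(M_j-\bar M)^{\T}Z_j(M_j-\bar M)\succeq0 .
\]
Exactness through the fusion therefore holds iff the regression coefficient $L_j=-M_j=(W^{\T}\mathcal P_jU)(U^{\T}\mathcal P_jU)^{-1}$ is common to the neighbourhood. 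The update leaves $L_j$ unchanged, because it only adds $U^{\T}JU$ to the $UU$ block of the information. The prediction, however, sends it to $[(W^{\T}FU+W^{\T}FW\,L_j)A_jF^{q\T}+W^{\T}\widetilde QU](F^qA_jF^{q\T}+U^{\T}\widetilde QU)^{-1}$, with $A_j=U^{\T}\mathcal P_jU$ and $\widetilde Q=Q+\Sigma^f$. This depends on $A_j$ as soon as $W^{\T}FU\neq0$.

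Neighbours on a non-complete graph carry different $A_j$ after one fusion, so the inequality is generically strict. It is already strict at the second fusion for $F=\bigl[\begin{smallmatrix}1&0\\1&1\end{smallmatrix}\bigr]$, $Q=I$, $\mathcal V=\operatorname{span}e_2$, with two measuring nodes on a three-node path. Then the fused $U^{\T}\mathcal P_lU$ of \cref{DCUKF3} is strictly smaller than the quotient filter's, and the gains and $U^{\T}\hat{\mathbf x}$ differ from then on. Your fallback class $L_j\equiv0$ is invariant only if $W^{\T}F_nU=0$ and $W^{\T}\widetilde Q_nU=0$, i.e.\ in the setting of (iii).

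The paper's proof has the same hole. It verifies prediction, innovation, gain and update ``for any cross-covariance'' and says nothing about the fusion. As stated, the exactness in (i) is justified only stage by stage, and through the fusion only under (iii). What the corollary needs quantitatively survives without it, and you should route the argument this way.

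The bound in (i) follows from \cref{thm:Pbounds3} applied to the full filter, whose hypotheses hold because $F_n\mathcal V=\mathcal V$ and $J_nW=0$. From $\mathcal P^{-1}\preceq\bar p_3I$ you get $U^{\T}\mathcal PU\succeq\bar p_3^{-1}I$, and the theorem's stated quotient consequence gives $U^{\T}\mathcal PU\preceq\underline p_3^{-1}I$.

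For (ii), the diffusion branch of \cref{thm:errbound} needs only two things. The first is the per-node Joseph recursion from fused to local posterior, which your projections supply for $U^{\T}\mathcal PU$ with the optimal quotient gain $U^{\T}\mathcal K$. The second is stage-uniform bounds on the quotient covariances, which follow from the full-filter bounds and one quotient prediction. No identity with the quotient filter's fused covariance is required.

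You and the paper also use two hypotheses tacitly. Autonomy of the quotient error recursion needs the discrepancies to descend as well, $U^{\T}E_nW=0$ and $D_nW=0$, which is automatic only when $E=D=0$. The block-diagonality induction in (iii) needs a block-diagonal initialization and $U^{\T}(Q_n+\Sigma^f_n)W=0$.
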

\begin{remark}\label{rem:quoterr}
\cref{cor:quoterr} claims only the quotient -- the full-state error along $\mathcal V$ is an unobserved random walk. Open is the information mean with \emph{coupled} $\mathcal V$ ($W^{\T}F_nU\neq0$): the weights $M_j=\mathcal P_{\mathrm{new}}\mathcal P_j^{-1}$ acquire cross blocks, and although $\sum_jU^{\T}M_jW=0$ cancels the common $\mathcal V$-drift, the \emph{spread} of the nodes' $\mathcal V$-errors can enter the quotient. Running the filter directly on $U^{\T}\mathbf x$ sidesteps this for any descending model (\cref{thm:errbound}).
\end{remark}

\subsection{The \texorpdfstring{$H_{\infty}$}{H-infinity} filter}
The results in this subsection concern the arithmetic diffusion step implemented in \cref{UKF22}; an information-weighted state fusion would require a matching convex fusion and storage of the posterior information matrices, which is not part of that algorithm.
\begin{theorem}\label{thm:hinfP}
Under \cref{ass:bounds,ass:hinf}, the regularized information matrix of \cref{UKF22} is uniformly bounded:
\[
\underline p_0I\preceq M_{l,n}\triangleq\mathcal P_{l,n}^{-1}+\theta^{-2}I\preceq\bar p_HI,
\qquad n\geq0.
\]
Consequently $(\underline p_0/2)I\preceq\mathcal P_{l,n}^{-1}\preceq\bar p_HI$, and the update \eqref{eq:hinfup} is well posed at every base tick.
\end{theorem}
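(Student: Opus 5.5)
Set $M_{l,n}\triangleq\mathcal P_{l,n}^{-1}+\theta^{-2}I$ and track it through one base tick. I would prove the two bounds separately: the upper bound directly by induction on $n$, and the lower bound on the initial prefix $0\le n<N$ from \cref{ass:hinf} and for $n\ge N$ from the windowed Gramian \eqref{eq:gramian} with $\mathcal V=\{0\}$. The consequence $(\underline p_0/2)I\preceq\mathcal P_{l,n}^{-1}$ then follows from $\theta^{-2}<\underline p_0/2$. It also gives well-posedness at every tick, since the argument of the inverse in \eqref{eq:hinfpsi} is $M_{l,n}\succ0$ and the stored information is positive definite.

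\emph{Upper bound.} Any covariance propagated by \eqref{eq:Ppred} satisfies $\mathcal P_{l,n|n-1}\succeq Q+\Sigma^f\succeq\underline q I$, so $\mathcal P_{l,n|n-1}^{-1}\preceq\underline q^{-1}I$. The step-\eqref{eq:hinfup} increment is $\mathcal I_{l,n}=\sum_{j}H_j^{\T}R_j^{-1}H_j\preceq\bar d\,\bar h^2\underline r^{-1}I$ by the identity following \eqref{eq:crossdelta}. Hence $M_{l,n}=\mathcal P_{l,n|n-1}^{-1}+\mathcal I_{l,n}\preceq(\underline q^{-1}+\bar d\bar h^2\underline r^{-1})I\preceq\bar p_HI$ for $n\ge1$. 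The case $n=0$ is covered by the first term in the maximum defining $\bar p_H$. The diffusion step \eqref{eq:diffusion} touches only means, so it does not affect this bound.

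\emph{One-step lower recursion.} Using $\mathcal P_{l,n-1}^{-1}=M_{l,n-1}-\theta^{-2}I\preceq\bar p_H I$, the matrix inversion lemma on $\mathcal P_{n|n-1}=F\mathcal P_{n-1}F^{\T}+Q+\Sigma^f$ gives, exactly as in the proof of \cref{thm:Pbounds3},
\[
\mathcal P_{l,n|n-1}^{-1}\succeq\alpha_H F_{l,n-1}^{-\T}\mathcal P_{l,n-1}^{-1}F_{l,n-1}^{-1},\qquad \alpha_H=(1+\bar p_H\bar q\check f^2)^{-1}.
\]
Adding $\theta^{-2}I$ yields
\[
M_{l,n}\succeq\alpha_HF^{-\T}M_{l,n-1}F^{-1}-\alpha_H\theta^{-2}F^{-\T}F^{-1}+\mathcal I_{l,n}.
\]
I would then split $\alpha_H=\tfrac{\alpha_H}{2}+\tfrac{\alpha_H}{2}$: one half carries the information, and the other half must absorb the attenuation loss $\theta^{-2}F^{-\T}F^{-1}$. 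The latter works whenever $M_{l,n-1}\succeq\underline p_0I\succeq2\theta^{-2}I$, which is exactly where $\theta^{-2}<\underline p_0/2$ enters. Under that inductive hypothesis this gives
\[
M_{l,n}\succeq\tfrac{\alpha_H}{2}F^{-\T}M_{l,n-1}F^{-1}+\mathcal I_{l,n}.
\]
Note that $\mathcal I_{l,n}\succeq J^{\mathrm{col}}_{l,n}$, since the stored contributions omit $\Sigma$ in the weighting. Any discrepancy between the stored weights and $J$ would be handled through $\Upsilon$, but the covariance recursion uses only the computed matrices, so no discrepancy enters here.

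\emph{Induction over windows.} I would use strong induction on $n$. For $n<N$, the lower bound is exactly the finite-prefix check in \cref{ass:hinf}. For $n\ge N$, all $M_{l,k}$ with $k<n$ are at least $\underline p_0I$ by hypothesis, so the halved recursion applies at every tick in the window. Unrolling it over $k=n-N+1,\dots,n$ and dropping the nonnegative carried-over term gives
\[
M_{l,n}\succeq\sum_{k}(\alpha_H/2)^{n-k}\Phi_l(k,n)^{\T}\mathcal I_{l,k}\Phi_l(k,n)\succeq(\alpha_H/2)^{N-1}\beta I=\underline p_0I,
\]
which closes the induction.

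The main obstacle is this step. The unrolling mixes node $l$'s own $F_{l,k}$ with the neighbours' increments, and this matches the transition products $\Phi_l$ of \cref{def:collective}, so it is consistent. The delicate point is that the attenuation absorption needs the lower bound at every intermediate tick, which is why the argument must be a strong induction seeded by the finite prefix rather than a one-shot window argument. I would also verify that Step 3 of \cref{UKF22} leaves $\mathcal P^{-1}$ unchanged, so no $\underline c$ factor appears, consistent with $\underline p_0$ lacking one.
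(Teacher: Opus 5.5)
Your proposal is correct and follows the paper's proof essentially step for step: the shrink $\mathcal P_{l,n-1}^{-1}=M_{l,n-1}-\theta^{-2}I\succeq\tfrac12M_{l,n-1}$ under the inductive margin, \cref{lem:alpha} with $\bar p_H$ giving $M_{l,n}\succeq\tfrac{\alpha_H}{2}F_{l,n-1}^{-\T}M_{l,n-1}F_{l,n-1}^{-1}+J^{\mathrm{col}}_{l,n}$, unrolling over the window with \eqref{eq:gramian} at $\Pi=I$, and the finite prefix of \cref{ass:hinf} for $n<N$. One adjustment is needed: the upper bound cannot be proved ``separately'', because $\mathcal P_{l,n|n-1}\succeq\underline qI$ presupposes $\mathcal P_{l,n-1}\succ0$, which comes from the lower margin at $n-1$, so both bounds must be carried in the same induction, as the paper does.
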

\begin{theorem}\label{thm:hinferr}
Under \cref{ass:bounds,ass:hinf}, define
\begin{align*}
\underline\omega&\triangleq\alpha_H\underline p_0/(2\bar f^2), &
\gamma_\theta&\triangleq\underline\omega/\bar p_H^2,\\
\delta_\theta&\triangleq\underline q\,\underline p_0/(2\bar f^2), &
r_{\theta,0}&\triangleq(1-\theta^{-2}\gamma_\theta)(1+\delta_\theta)^{-1},\\
\bar d_\Upsilon&\triangleq\sup_{l,n}\|\Upsilon_{l,n}\|, &
\bar d_\theta&\triangleq\underline p_0^{-1}\!\left[\underline q^{-1}\bar d_E+\bar d_\Upsilon(\bar f+\bar d_E)\right].
\end{align*}
with $\Upsilon_{l,n}$ the cross-evaluation mismatch \eqref{eq:crossdelta}. If
\begin{equation}\label{eq:hinfdiscrepancy}
\bar d_\theta\leq
\sqrt{\frac{\underline p_0}{2\bar p_H}}\,\frac{1-\sqrt{r_{\theta,0}}}{2},
\end{equation}
then the local $H_\infty$ step has the noise-free contraction factor
$r_\theta\triangleq((1+\sqrt{r_{\theta,0}})/2)^2<1$. Define
\[
\tau_\theta\triangleq\frac{1-r_\theta}{2r_\theta},
\qquad
\rho_\theta\triangleq(1+\tau_\theta)r_\theta=\frac{1+r_\theta}{2}<1.
\]
For the diffusion mean implemented in Step 3 of \cref{UKF22}, the estimation error is exponentially bounded in mean square if additionally
\begin{equation}\label{eq:hinfgain}
\kappa_\theta \rho_\theta<1,
\qquad \kappa_\theta\triangleq2\bar p_H/\underline p_0.
\end{equation}
For linear dynamics and measurement channels, \eqref{eq:hinfdiscrepancy} is automatic because $E=\Delta=0$; the fusion condition \eqref{eq:hinfgain} remains. No information-mean branch is claimed for \cref{UKF22}, because its Step 3 diffuses states without storing the convexly fused information matrix required by \cref{lem:mix}.
\end{theorem}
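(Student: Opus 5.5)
The plan is to reuse the single-node Lyapunov argument of \cref{stablemma}, run in the regularized metric $M_{l,n}=\mathcal P_{l,n}^{-1}+\theta^{-2}I$ that \cref{thm:hinfP} bounds uniformly, and then to pass through the arithmetic diffusion step by a norm-equivalence argument. That step pays the factor $\kappa_\theta$. The candidate Lyapunov function at node $l$ is $V_{l,n}=\tilde{\mathbf x}_{l,n}^{\T}\mathcal P_{l,n}^{-1}\tilde{\mathbf x}_{l,n}$. By \cref{thm:hinfP}, $(\underline p_0/2)\|\tilde{\mathbf x}\|^2\le V_{l,n}\le\bar p_H\|\tilde{\mathbf x}\|^2$, which explains the constants $\underline p_0/2$ and $\bar p_H$ appearing in \eqref{eq:hinfdiscrepancy} and in $\kappa_\theta$.

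\emph{Step 1: the noise-free local contraction, linear part.} I drop noises and set $E=0$, $\Upsilon=0$. The error after prediction is $F\tilde{\mathbf x}$, and the predicted information satisfies $\mathcal P^{-1}_{n|n-1}=(F\mathcal P F^{\T}+Q+\Sigma^f)^{-1}$. A matrix-inversion-lemma estimate, as in \cref{stablemma} but with $\underline q$ and the lower bound $\underline p_0/2$, should give
\[
F^{\T}\mathcal P_{n|n-1}^{-1}F\preceq(1+\delta_\theta)^{-1}\mathcal P_{n-1}^{-1}.
\]
For the update I substitute \eqref{eq:hinfpsi} with the compensated innovation. Using $\mathcal P^{-1}\mathcal P^{xy}R^{-1}(\mathcal P^{xy})^{\T}\mathcal P^{-1}=H^{\T}R^{-1}H$, the posterior error becomes $(I-M^{-1}\mathcal I)\tilde{\mathbf x}^-=M^{-1}\mathcal P^{-1}_{n|n-1}\tilde{\mathbf x}^-$ plus noise terms. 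The quadratic form in $\mathcal P^{-1}_{n|n}=M-\theta^{-2}I$ is therefore $\tilde{\mathbf x}^{-\T}\mathcal P^{-1}_{-}M^{-1}(M-\theta^{-2}I)M^{-1}\mathcal P^{-1}_{-}\tilde{\mathbf x}^-$. This is bounded by $\tilde{\mathbf x}^{-\T}\mathcal P^{-1}_-\tilde{\mathbf x}^-$ times the factor $(1-\theta^{-2}\lambda_{\min}(\cdot))$. Combining this with the bounds $M\preceq\bar p_H I$, $\mathcal P^{-1}_-\succeq\underline\omega I$ and $\mathcal P^{-1}_-\preceq M$ yields the factor $1-\theta^{-2}\gamma_\theta$. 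The product of the two factors gives $r_{\theta,0}$.

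\emph{Step 2: perturbations and noise.} I reinsert $E$ in the prediction and $\Upsilon(\Gamma F+\ldots)$ in the update. By the triangle inequality in the $V$-norm, the square root of $V$ contracts by $\sqrt{r_{\theta,0}}$ plus an additive perturbation. In the normalized metric, that perturbation is bounded by $\sqrt{2\bar p_H/\underline p_0}\,\bar d_\theta$. Here the factor $\underline q^{-1}\bar d_E$ comes from $\|\mathcal P^{-1}_-\|\le\underline q^{-1}$, and the cross-evaluation term comes from $\|M^{-1}\|\le 2/\underline p_0$. Condition \eqref{eq:hinfdiscrepancy} then makes the total square-root factor at most $(1+\sqrt{r_{\theta,0}})/2$, so $r_\theta<1$. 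The noise enters linearly through $\mathbf w$ and through $M^{-1}\mathcal T$ evaluated at $\mathbf v$. I handle the cross terms with Young's inequality at weight $\tau_\theta$ instead of independence, exactly as in \cref{stablemma}. This gives
\[
\E{V^\psi_{l,n}\mid\mathcal F_{n-1}}\le\rho_\theta V_{l,n-1}+c_\theta
\]
with $c_\theta$ built from $\bar q$, $\bar r$, $\bar d\bar h^2$, $\bar p_H$, $\underline p_0$ and $\tau_\theta$.

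\emph{Step 3: the diffusion step and the main obstacle.} The fused error is $\sum_j c_{j,l}\tilde{\boldsymbol\psi}_j$. Convexity of $\|\cdot\|^2$ gives $\|\tilde{\mathbf x}_{l}\|^2\le\sum_jc_{j,l}\|\tilde{\boldsymbol\psi}_j\|^2$. Passing through Euclidean norms gives
\[
V_{l,n}\le\bar p_H\sum_jc_{j,l}\|\tilde{\boldsymbol\psi}_j\|^2\le\kappa_\theta\sum_jc_{j,l}V^\psi_{j,n}.
\]
With $\bar V_n=\max_l\E{V_{l,n}}$, this yields $\bar V_n\le\kappa_\theta\rho_\theta\bar V_{n-1}+\kappa_\theta c_\theta$, and \eqref{eq:hinfgain} closes the recursion geometrically. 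Exponential mean-square boundedness then follows from the norm equivalence.

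I expect Step 1's update factor to be the main obstacle. The subtraction of $\theta^{-2}I$ makes $\mathcal P^{-1}_{n|n}$ and $M$ differ, so the standard identity $\tilde{\mathbf x}^{+\T}\mathcal P_+^{-1}\tilde{\mathbf x}^+\le\tilde{\mathbf x}^{-\T}\mathcal P_-^{-1}\tilde{\mathbf x}^-$ must be rederived with care. In particular, I must check that the bound $1-\theta^{-2}\gamma_\theta$ really follows from $M^{-1}\mathcal P^{-1}_-M^{-1}\succeq\underline\omega\bar p_H^{-2}I$. That in turn requires the lower bound $\underline\omega$ on the predicted information, which I would derive from $\mathcal P^{-1}_-\succeq\alpha_H F^{-\T}M F^{-1}$-type estimates and $M\succeq\underline p_0 I$.
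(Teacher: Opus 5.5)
Your proposal matches the paper's proof essentially step for step. It uses the same Lyapunov function $\tilde{\mathbf x}^{\T}\mathcal P^{-1}\tilde{\mathbf x}$ and the same split of the noise-free map into the baseline $M_n^{-1}\Omega_nF_{n-1}$ and the perturbation $M_n^{-1}[\Omega_nE_{n-1}-\Upsilon_n(F_{n-1}+E_{n-1})]$. The baseline contracts by the prediction factor $(1+\delta_\theta)^{-1}$ times the update factor $1-\theta^{-2}\gamma_\theta$. The perturbation and noise are handled by Cauchy--Schwarz and Young in the stored-information metric, and the diffusion step costs $\kappa_\theta$ via Euclidean norms.

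One constant needs correcting. Use $\|M_n^{-1}\|\leq\underline p_0^{-1}$, which follows from $M_n\succeq\underline p_0I$ in \cref{thm:hinfP}, not $2/\underline p_0$. Apply it to both the $E$ and the $\Upsilon$ terms. With $2/\underline p_0$ the perturbation bound doubles, and \eqref{eq:hinfdiscrepancy} no longer yields the contraction factor $r_\theta$.
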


\section{Comparison of the Strategies}\label{sec:compare}
\Cref{tab:compare} gives the main communication and guarantee trade-offs. State-only \cref{DCUKF1} is cheapest but has no covariance guarantee. The subspace filter \cref{DCUKF2} bounds smaller local covariance matrices but needs fixed decompositions and full-state reconstruction before diffusion. Information diffusion \cref{DCUKF3} bounds quotient covariances for either mean; under the coherence condition, the information mean avoids the contraction--mixing condition required by the diffusion mean and is the recommended default. The $H_{\infty}$ filter has the same communication order and adds robustness through a feasible fixed attenuation penalty together with a proved lower margin on the regularized information.

The computed $F,H,\Sigma^f,\Sigma,J$ and realized Gramian can be logged during commissioning. The true-relation discrepancies $E,D,\Upsilon$ are not directly observable and require regional offline bounds or independent certification. The finite-prefix $H_{\infty}$ regularized-information margin is directly checkable, while the diffusion-mean mixing test is conservative.

\section{Numerical Studies}\label{sec:sim}
Two benchmark studies validate the framework outside any particular application domain. Both studies use the four-node graph and Metropolis weight matrix of \cite{WeightAveCons} (edges $1\!-\!2$, $1\!-\!3$, $2\!-\!3$, $3\!-\!4$) and Monte Carlo averaging. Within each run, every estimator receives the same truth, initial estimates, and measurement noises. All estimators share the scaled symmetric sigma set with $2n_{\mathbf{x}}{+}1$ points and $n_{\mathbf{x}}+\kappa=3$, which fits the generic weighted notation of Appendix~\ref{app:ukf}. The theoretical statements apply whenever the resulting covariances and regression residuals satisfy \cref{ass:bounds}. The centralized reference assimilates the stacked measurements sequentially with sigma points regenerated per sensor. A run is declared divergent if its error exceeds $10^{3}$ after step 20 or becomes non-finite; divergent runs are excluded from the reported RMSE and counted separately.

\subsection{Stochastic nonlinear benchmark}\label{sec:simbench1}
We reproduce the nonlinear WAC-UKF example of~\cite{WeightAveCons} with its dynamics, four noise levels, initial estimates, $100$ steps, and $30$ runs. The compared methods are isolated, state-only diffusion, WAC-UKF with $L=5$, the two information-diffusion means, and a centralized UKF. Besides the original single-rate case, nodes $1,3$ sample every second tick and nodes $2,4$ every third.

\begin{table}[t]
\centering
\caption{Benchmark of \cite{WeightAveCons}: steady-state network-average RMSE ($t>20$) over converged runs, with divergent runs out of 30 in brackets, and scalars broadcast per node per step.}
\label{tab:bench}
\footnotesize
\setlength{\tabcolsep}{4.5pt}
\begin{tabular}{lccc}
\hline
 & single-rate & multi-rate $3\!:\!2$ & comm.\\
\hline
isolated UKFs & 0.52 [7] & --- [30] & 0\\
one-hop state-only & 0.99 [1] & --- [30] & 2\\
WAC-UKF ($L{=}5$) \cite{WeightAveCons} & 0.70 [0] & --- [30] & 25\\
one-hop info.\ diffusion, \eqref{eq:fusemeana} & 0.99 [0] & --- [30] & 5\\
one-hop info.\ diffusion, \eqref{eq:fusemeanb} & \textbf{0.47} [0] & \textbf{0.036} [0] & 5\\
centralized & 0.63 [0] & 0.016 [1] & ---\\
\hline
\end{tabular}
\end{table}

\begin{figure}[!tb]
    \centering
    \includegraphics[width=0.95\columnwidth]{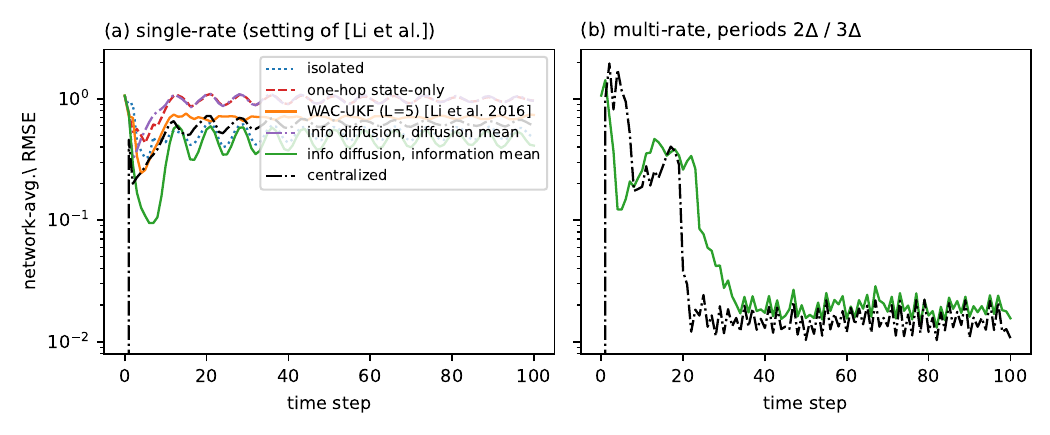}
    \caption{Benchmark of \cite{WeightAveCons}, network-average RMSE over converged runs: (a) the single-rate regime of \cite{WeightAveCons}; (b) the $3\!:\!2$ multi-rate regime, in which the covariance-averaging consensus of \cite{WeightAveCons} diverges in every run while one-hop information diffusion remains within a factor $2.3$ of the centralized filter.}
    \label{fig:bench1}
\end{figure}

\Cref{tab:bench,fig:bench1} show that the information mean is the most accurate distributed method in the single-rate case at one fifth of WAC-UKF communication. Under $3\!:\!2$ sampling, WAC-UKF, isolated, state-only, and the diffusion mean diverge in all runs, whereas the information mean never diverges and is within a factor $2.3$ of centralized. This agrees with the theory: arithmetic covariance treatment can become overconfident under sparse updates, while information averaging is conservative. The even channel $\sin(x_1)x_1$ creates a sign-symmetric basin, so the ordering is more informative than the absolute single-rate errors. The centralized UKF is not an optimal-filter bound in this nonlinear benchmark. Per-run inspection shows its single-rate deficit is carried by five outlier runs that settle -- with correct signs throughout -- on a spurious wrong-amplitude attractor near steady error $1.5$, consistent with the multiple preimages of the non-injective channel, while in the remaining runs the two are comparable (median $0.010$ vs.\ $0.012$); the outlier tail dominates the reported RMS. Its single divergent multi-rate run is an early transient in which the quadratic $x_2^{2}$ term amplifies overshoot between sparse, low-information updates near the origin; the four sign-straddling initializations and the conservative information averaging of the distributed filter avoid both failure modes; moreover, the WAC-UKF theorem assumes the uniform covariance bounds derived here~\cite[Thm.~2]{WeightAveCons}.

\subsection{Three-inertia network with individually undetectable nodes}\label{sec:simbench2}
The linear three-inertia testbed~\cite{KimLeeShim2020,LuoSuZeng2026} uses four relative-position/velocity sensors, $\Delta=0.05$ s, noise standard deviation $0.02$, and $20$ runs of $60$ s. Local observability ranks are $4,3,4,5$ of $6$; the collective rank is $5$ because the rigid-body mode $v=[1,0,1,0,1,0]^{\T}$ is invisible. Thus \cref{def:collective} holds on the quotient -- errors and covariances referenced modulo the invisible mode -- with the same $2/3$ sampling schedule.

\begin{figure}[!tb]
    \centering
    \includegraphics[width=0.95\columnwidth]{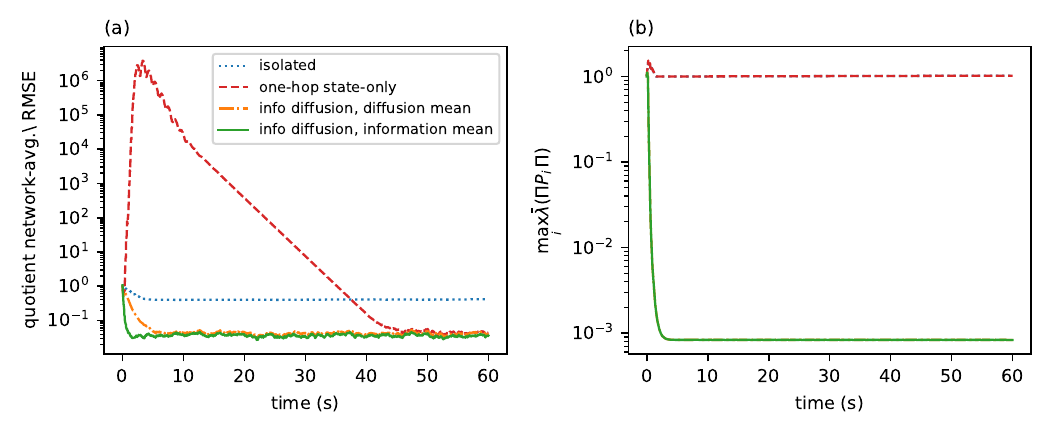}
    \caption{Three-inertia network, multi-rate sampling, quotient (rigid-body-mode-referenced) quantities: (a) network-average RMSE; (b) largest projected covariance eigenvalue $\max_{i}\bar\lambda(\Pi P_{i}\Pi)$. Isolated nodes and state-only averaging retain their initial uncertainty on the locally invisible directions, while information diffusion contracts the quotient covariance by three orders of magnitude, as \cref{thm:Pbounds3} guarantees.}
    \label{fig:bench2}
\end{figure}

In \cref{fig:bench2}, quotient RMSE is $0.41$ for isolated filters and $0.67$ for state-only averaging, but $0.042/0.036$ for the two information-diffusion means. With the neutral pair structurally invisible and dynamically decoupled, these quotient errors are covered by \cref{cor:quoterr}. The projected covariance contracts from $1$ to $10^{-3}$. Direct evaluation of \eqref{eq:gramian} with $N=8$ gives a network-wide projected eigenvalue floor $1.3\times10^{-4}$, attained at node 4. Hence the covariance contraction is guaranteed by \cref{thm:Pbounds3}, and the quotient error behaviour by \cref{cor:quoterr}. Together, the studies show that the fused mean, not the shared covariance rule, determines nonlinear robustness.

\section{Conclusion}\label{sec:conclusion}
A common base grid and one-hop information diffusion enable distributed UKFs to operate under heterogeneous sampling and collective rather than nodewise detectability. The analysis gives uniform quotient-covariance bounds for either fused mean. When the invisible subspace is trivial and the coherence margin holds, it further gives mean-square error bounds: without a mixing condition for the information mean, with one for the diffusion mean. The fixed-attenuation $H_{\infty}$ variant adds explicit feasibility, regularized-information, and contraction margins. The benchmarks confirm the predicted separation: state-only and arithmetic-covariance fusion can fail, whereas information-mean diffusion is accurate, communication-efficient, and stable under $3\!:\!2$ sampling. Future work includes adaptive fusion weights and time-varying graphs.

\FloatBarrier
\appendix
\setlength{\abovedisplayskip}{4pt plus 1pt minus 2pt}
\setlength{\belowdisplayskip}{4pt plus 1pt minus 2pt}
\setlength{\abovedisplayshortskip}{2pt plus 1pt}
\setlength{\belowdisplayshortskip}{2pt plus 1pt}
\setlength{\jot}{2pt}
\section{CUKF Stages}\label{app:ukf}
From $(\hat x,P)$, choose a symmetric weighted sigma set $\{\chi^i,W_i^{(m)},W_i^{(c)}\}_{i=0}^{N_\sigma-1}$ satisfying the usual mean and covariance reproduction identities. This includes both the equal-weight $2n_x$ set $\hat x\pm\{\sqrt{n_xP}\}_i$, used in several power-system implementations~\cite{SinghPalDecentralized,ZhaoMili2019TSP}, and the scaled $2n_x+1$ unscented set used in \cref{sec:sim}. Propagating the points through one Runge--Kutta sub-step and recombining them with $W_i^{(m)},W_i^{(c)}$ gives the predicted mean, the propagated spread $\mathcal P_f^{\sigma}$, and thence the predicted covariance $\mathcal P^{-}=\mathcal P_f^{\sigma}+Q$; repeating $\mathsf m$ times advances one base period. At a sampling tick, fresh predicted sigma points pass through $\mathscr H_l$ -- the map of whichever sensors sample at that tick -- giving $\hat z$, $P^{yy}$, and $P^{xy}$, followed by $K=P^{xy}(P^{yy})^{-1}$, $\hat x^+=\hat x^-+K(y-\hat z)$, and $P^+=P^--KP^{yy}K^{\T}$. These are the statistics used in \eqref{eq:Hpseudo}--\eqref{eq:Jinc}; the proofs require only the stated covariance identities, positive definiteness, and residual bounds, not a particular symmetric rule. Thus a chosen rule is covered only on trajectories for which the conditions in \cref{ass:bounds} hold.

\section{Proofs of the Theorems}\label{app:proofs}
\begingroup\footnotesize
\subsection{Proof of \texorpdfstring{\cref{lem:rank}}{Lemma 1}}
The first claim is the uniform observability notion for time-varying discrete-time pairs~\cite{anderson1981detectability}: the windowed Gramian $\mathcal O_{l,n}^{\T}\mathcal O_{l,n}=\sum_k\Phi^{\T}(k,n)J_{l,k}\Phi(k,n)$ bounded below by $\beta_lI$ uniformly in $n$ is uniform complete observability, whereas a rank condition alone allows the smallest eigenvalue to decay with $n$. For the second, let the columns of $W$ span the fixed nullspace, those of $U$ an orthonormal complement, and take $\Theta_l=[U\;W]^{\T}$ (a general uniformly conditioned $\Theta_l$ changes the constants by its condition number only). Invariance gives $U^{\T}F_{l,n}W=0$, so $\Theta_lF_{l,n}\Theta_l^{-1}$ is block-triangular. Every $\mathbf x\in\ker\mathcal O_{l,n}$ satisfies $J_{l,k}^{1/2}\Phi(k,n)\mathbf x=0$ blockwise, hence $\mathcal O_{l,n}\Theta_l^{-1}=[\,\mathcal O^{ob}_{l,n}\;\;0\,]$ and the reduced pair's windowed Gramian is the observable block $U^{\T}\mathcal O_{l,n}^{\T}\mathcal O_{l,n}U$; a uniform lower bound on it is uniform complete observability of the reduced pair. $\hfill\blacksquare$

\subsection{Proof of \texorpdfstring{\cref{stablemma}}{Theorem 1}}
Drop the node index and let $\varrho_n\in\{0,1\}$ indicate whether the local measurement is present. Put $\widetilde Q_{n-1}=Q_{n-1}+\Sigma^f_{n-1}$ and $\widetilde R_n=R_n+\Sigma_n$, with $\Sigma^{f},\Sigma$ the regression residuals of \cref{sec:linearization}. Define the \emph{computed} closed-loop matrix
\[
B_n\triangleq(I-\varrho_n\mathcal K_nH_n)F_{n-1}.
\]
The standard Joseph identity~\cite{simon2006optimal}, applied to \eqref{eq:Ppred}--\eqref{eq:Kgain}, gives the exact computed-covariance recursion
\begin{equation}\label{eq:Precur}
\mathcal P_n=B_n\mathcal P_{n-1}B_n^{\T}+W_n,
\end{equation}
where
\begin{equation*}
W_n=(I-\varrho_n\mathcal K_nH_n)\widetilde Q_{n-1}
(I-\varrho_n\mathcal K_nH_n)^{\T}+\varrho_n\mathcal K_n\widetilde R_n\mathcal K_n^{\T}\succeq0.
\end{equation*}
This recursion contains only quantities computed by the filter.

By the covariance bounds and \cref{ass:bounds}, $\mathcal P_{n|n-1}\preceq\bar p^-I$ and $\|\mathcal K_n\|\leq\bar K$. At a measurement tick,
$I-\mathcal K_nH_n=\mathcal P_n\mathcal P_{n|n-1}^{-1}$, hence
$\sigma_{\min}(I-\mathcal K_nH_n)\geq(\bar p\bar p^-)^{-1}$; at a silent tick this matrix is $I$. Thus $W_n\succeq\sigma_0^2\underline q I$ and $\|B_n\|\leq\bar f(1+\bar K\bar h)$. Since both $F_{n-1}$ and $I-\varrho_n\mathcal K_nH_n$ are nonsingular, \eqref{eq:Precur} implies
\begin{equation}\label{eq:contraction}
B_n^{\T}\mathcal P_n^{-1}B_n\preceq r_0\mathcal P_{n-1}^{-1}.
\end{equation}
Indeed, factor $B_n$ from \eqref{eq:Precur}, use
$B_n^{-1}W_nB_n^{-\T}\succeq\delta I\succeq\delta\underline p\mathcal P_{n-1}$, and invert.

The true error recursion follows from \eqref{eq:errdyn}--\eqref{eq:errobs}:
\[
\widetilde{\mathbf x}_n=B_n\widetilde{\mathbf x}_{n-1}+T_n\widetilde{\mathbf x}_{n-1}+\eta_n,
\]
with
\begin{align*}
T_n={}&(I-\varrho_n\mathcal K_nH_n)E_{n-1}
-\varrho_n\mathcal K_nD_n(F_{n-1}+E_{n-1}), &\|T_n\|&\leq\bar d_*.
\end{align*}
and disturbance term
$\eta_n=(I-\varrho_n\mathcal K_n(H_n+D_n))\mathbf w_{n-1}-\varrho_n\mathcal K_n\mathbf v_n$.
For $V_n=\widetilde{\mathbf x}_n^{\T}\mathcal P_n^{-1}\widetilde{\mathbf x}_n$, \eqref{eq:contraction} gives a baseline contribution at most $r_0V_{n-1}$, while
\[
(T_n\widetilde{\mathbf x})^{\T}\mathcal P_n^{-1}(T_n\widetilde{\mathbf x})
\leq(\bar p/\underline p)\bar d_*^2V_{n-1}.
\]
Cauchy--Schwarz in the $\mathcal P_n^{-1}$ inner product therefore bounds the complete noise-free map by
$\bigl(\sqrt{r_0}+\sqrt{\bar p/\underline p}\,\bar d_*\bigr)^2V_{n-1}\leq r_{\mathrm{loc}}V_{n-1}$ under \eqref{eq:localdiscrepancy}. The coefficients in $\eta_n$ need not be measurable before the contemporaneous noises. Nevertheless, their norms are uniformly bounded, and the white-noise assumptions give a finite constant $\bar\mu_v$ such that
$\mathbb E[\eta_n^{\T}\mathcal P_n^{-1}\eta_n\mid\mathcal F_{n-1}]\leq\bar\mu_v$.
Applying Young's inequality in the $\mathcal P_n^{-1}$ metric with $\tau_{\mathrm{loc}}=(1-r_{\mathrm{loc}})/(2r_{\mathrm{loc}})$, rather than discarding a cross term, gives
\[
\begin{aligned}
\mathbb E[V_n\mid\mathcal F_{n-1}]
&\leq \rho_{\mathrm{loc}}V_{n-1}
 +(1+\tau_{\mathrm{loc}}^{-1})\bar\mu_v,\\
\rho_{\mathrm{loc}}&=\frac{1+r_{\mathrm{loc}}}{2}<1.
\end{aligned}
\]
Iteration and $\underline p\|\widetilde{\mathbf x}_n\|^2\leq V_n$ give exponential mean-square boundedness. \hfill$\blacksquare$

\subsection{Auxiliary lemmas}

\begin{lemma}[adapted from {\cite[Lemma~5]{wei2018}}]\label{lem:alpha}
Let \cref{ass:bounds}.1 hold and suppose $\mathcal P_{l,n}^{-1}\preceq\bar pI$. With
\[
\alpha\triangleq(1+\bar p\bar q\check f^2)^{-1},
\]
the computed prediction satisfies
\[
\mathcal P_{l,n+1|n}^{-1}\succeq
\alpha F_{l,n}^{-\T}\mathcal P_{l,n}^{-1}F_{l,n}^{-1}.
\]
\end{lemma}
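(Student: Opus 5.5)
The plan is to start from the computed prediction identity \eqref{eq:Ppred}, $\mathcal P_{l,n+1|n}=F_{l,n}\mathcal P_{l,n}F_{l,n}^{\T}+\widetilde Q_n$ with $\widetilde Q_n\triangleq Q_n+\Sigma^f_{l,n}\preceq\bar qI$ by \eqref{eq:QRbounds}. Since $F_{l,n}$ is invertible with $\|F_{l,n}^{-1}\|\leq\check f$, we factor it out on both sides:
\[
\mathcal P_{l,n+1|n}=F_{l,n}\bigl(\mathcal P_{l,n}+F_{l,n}^{-1}\widetilde Q_nF_{l,n}^{-\T}\bigr)F_{l,n}^{\T}.
\]
The goal then reduces to showing that the bracketed matrix is at most $\alpha^{-1}\mathcal P_{l,n}$. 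Inverting this relation, which is order-reversing on positive definite matrices, and then conjugating by $F_{l,n}^{-\T}(\cdot)F_{l,n}^{-1}$ yields the claimed bound directly.

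For the key bound we use $F_{l,n}^{-1}\widetilde Q_nF_{l,n}^{-\T}\preceq\bar q\,\|F_{l,n}^{-1}\|^2I\preceq\bar q\check f^2I$. The hypothesis $\mathcal P_{l,n}^{-1}\preceq\bar pI$ is equivalent to $\mathcal P_{l,n}\succeq\bar p^{-1}I$, so $I\preceq\bar p\,\mathcal P_{l,n}$. Combining the two gives
\[
\mathcal P_{l,n}+F_{l,n}^{-1}\widetilde Q_nF_{l,n}^{-\T}\preceq(1+\bar p\bar q\check f^2)\mathcal P_{l,n}=\alpha^{-1}\mathcal P_{l,n}.
\]
Inverting both sides gives $(\cdots)^{-1}\succeq\alpha\mathcal P_{l,n}^{-1}$. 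Then $\mathcal P_{l,n+1|n}^{-1}=F_{l,n}^{-\T}(\cdots)^{-1}F_{l,n}^{-1}\succeq\alpha F_{l,n}^{-\T}\mathcal P_{l,n}^{-1}F_{l,n}^{-1}$, because congruence preserves the Loewner order.

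No step here is a real obstacle; the argument is a two-line Loewner-order manipulation. The only points to state carefully are that the bound uses the computed residual $\Sigma^f$ folded into $\widetilde Q_n$, which is legitimate precisely because \cref{ass:bounds} bounds $Q_n+\Sigma^f_{l,n}$ rather than $Q_n$ alone, and that positive definiteness of $\mathcal P_{l,n}$ (from \cref{ass:bounds}.2, propagated by the recursion) is what justifies the matrix inversions. The multi-rate structure plays no role, since this is a single prediction tick, and at non-sampling ticks $\mathcal P_{l,n}$ is simply the previous prediction.
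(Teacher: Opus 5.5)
Your proof is correct and is essentially the paper's Loewner-order argument. The paper bounds $\widetilde Q_n\preceq\bar qI\preceq\bar q\check f^2FF^{\T}\preceq\bar p\bar q\check f^2F\mathcal PF^{\T}$ directly inside \eqref{eq:Ppred} and then inverts. You factor $F$ out first and bound $F^{-1}\widetilde Q_nF^{-\T}\preceq\bar q\check f^2I\preceq\bar p\bar q\check f^2\mathcal P$, which is the same inequality up to congruence by $F$.
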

\begin{proof}
Write $\widetilde Q_n=Q_n+\Sigma^f_{l,n}$. From $\|F^{-1}\|\leq\check f$ and $\mathcal P^{-1}\preceq\bar pI$,
$\widetilde Q_n\preceq\bar qI\preceq\bar q\check f^2FF^{\T}\preceq\bar p\bar q\check f^2F\mathcal PF^{\T}$. Hence \eqref{eq:Ppred} gives
$\mathcal P_{n+1|n}\preceq(1+\bar p\bar q\check f^2)F\mathcal PF^{\T}$; inversion proves the claim.
\end{proof}

\begin{lemma}[exact mixing for the information mean]\label{lem:mix}
Let $\Omega_{j}\succ0$, $c_{j}\geq0$ with $\sum_{j}c_{j}=1$, and set $\Omega\triangleq\sum_{j}c_{j}\Omega_{j}$ and $\mathbf{e}\triangleq\Omega^{-1}\sum_{j}c_{j}\Omega_{j}\mathbf{e}_{j}$. Then
$\mathbf{e}^{\T}\Omega\,\mathbf{e}\leq\sum_{j}c_{j}\,\mathbf{e}_{j}^{\T}\Omega_{j}\mathbf{e}_{j}$.
\end{lemma}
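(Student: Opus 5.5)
The plan is to read the inequality as a convexity statement for the quadratic-over-linear map $(\mathbf{e},\Omega)\mapsto\mathbf{e}^{\T}\Omega\mathbf{e}$ written in information coordinates. Put $\mathbf{z}_j\triangleq\Omega_j\mathbf{e}_j$, so that $\mathbf{e}_j^{\T}\Omega_j\mathbf{e}_j=\mathbf{z}_j^{\T}\Omega_j^{-1}\mathbf{z}_j$. With $\mathbf{z}\triangleq\sum_jc_j\mathbf{z}_j=\Omega\mathbf{e}$, the left-hand side equals $\mathbf{z}^{\T}\Omega^{-1}\mathbf{z}$. The claim then reads $g\bigl(\sum_jc_j(\mathbf{z}_j,\Omega_j)\bigr)\leq\sum_jc_j\,g(\mathbf{z}_j,\Omega_j)$ with $g(\mathbf{z},\Omega)=\mathbf{z}^{\T}\Omega^{-1}\mathbf{z}$. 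This is Jensen's inequality for the jointly convex matrix-fractional function.

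To keep the argument self-contained, I would prove the joint convexity directly through a Schur-complement (or completing-the-square) step instead of citing it. For each $j$, expand the nonnegative quantity $(\mathbf{e}_j-\mathbf{e})^{\T}\Omega_j(\mathbf{e}_j-\mathbf{e})\geq0$:
\[
\mathbf{e}_j^{\T}\Omega_j\mathbf{e}_j\geq 2\mathbf{e}^{\T}\Omega_j\mathbf{e}_j-\mathbf{e}^{\T}\Omega_j\mathbf{e}.
\]
Next, multiply by $c_j\geq0$ and sum over $j$. Using $\sum_jc_j\Omega_j\mathbf{e}_j=\Omega\mathbf{e}$ and $\sum_jc_j\Omega_j=\Omega$, the right-hand side becomes $2\mathbf{e}^{\T}\Omega\mathbf{e}-\mathbf{e}^{\T}\Omega\mathbf{e}=\mathbf{e}^{\T}\Omega\mathbf{e}$, which is exactly the claim.

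The lemma also yields an exact remainder that could be reported: the total slack equals $\sum_jc_j(\mathbf{e}_j-\mathbf{e})^{\T}\Omega_j(\mathbf{e}_j-\mathbf{e})$. This makes it clear why no norm-equivalence factor $\kappa$ appears, in contrast with the diffusion mean discussed in \cref{rem:kappa}.

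The only step that needs care is well-posedness. We need $\Omega\succ0$ so that $\mathbf{e}$ is defined. This holds because $\Omega$ is a convex combination of positive definite matrices with at least one $c_j>0$, since the weights sum to one. Terms with $c_j=0$ contribute nothing, so I expect no real obstacle beyond choosing the right square to complete, namely centering at the fused error $\mathbf{e}$ in each local metric $\Omega_j$.
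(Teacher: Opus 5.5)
Your proof is correct and is essentially the paper's argument with the Schur-complement step unpacked. The paper takes the convex combination of the positive semidefinite block matrices $\bigl[\begin{smallmatrix}I\\ \mathbf{e}_{j}^{\T}\end{smallmatrix}\bigr]\Omega_{j}\bigl[\begin{smallmatrix}I&\mathbf{e}_{j}\end{smallmatrix}\bigr]$ and then its Schur complement with respect to $\Omega$; evaluating that combined matrix at the vector $(-\mathbf{e},1)$ gives exactly your $\sum_{j}c_{j}(\mathbf{e}_{j}-\mathbf{e})^{\T}\Omega_{j}(\mathbf{e}_{j}-\mathbf{e})\geq0$, and your version also records the slack as an exact identity and justifies $\Omega\succ0$, which the paper takes for granted.
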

\begin{proof}
Each matrix $\bigl[\begin{smallmatrix}I\\ \mathbf{e}_{j}^{\T}\end{smallmatrix}\bigr]\Omega_{j}\bigl[\begin{smallmatrix}I&\mathbf{e}_{j}\end{smallmatrix}\bigr]\succeq0$, hence so is their convex combination $\bigl[\begin{smallmatrix}\Omega&\mathbf{u}\\ \mathbf{u}^{\T}&s\end{smallmatrix}\bigr]$ with $\mathbf{u}=\sum_{j}c_{j}\Omega_{j}\mathbf{e}_{j}$ and $s=\sum_{j}c_{j}\mathbf{e}_{j}^{\T}\Omega_{j}\mathbf{e}_{j}$. Its Schur complement with respect to $\Omega\succ0$ gives $s-\mathbf{u}^{\T}\Omega^{-1}\mathbf{u}\geq0$, which is the claim since $\mathbf{e}^{\T}\Omega\mathbf{e}=\mathbf{u}^{\T}\Omega^{-1}\mathbf{u}$.
\end{proof}

\subsection{Proof of \texorpdfstring{\cref{thm:Pbounds}}{Theorem 2}}

Consider a node $l$ whose measurement update at each base tick assimilates the neighbourhood information, so that
\begin{equation}\label{eq:inforecur}
\mathcal P_{l,n}^{-1}=\mathcal P_{l,n|n-1}^{-1}+S_{l,n},
\end{equation}
where $S_{l,n}\triangleq\textstyle\sum_{j\in\mathcal{N}_{l}(n)}J_{j,n}$ collects the realized information increments \eqref{eq:Jinc} of the neighbours sampling at tick $n$ ($\mathcal{N}_{l}(n)\subseteq\mathcal{N}_{l}$).

\emph{Upper bound.} From \eqref{eq:Ppred}, $\mathcal P_{l,n|n-1}\succeq Q_{n-1}\succeq\underline q\,I$, hence $\mathcal P_{l,n|n-1}^{-1}\preceq\underline q^{-1}I$; and $S_{l,n}\preceq|\mathcal{N}_{l}|\,\bar h^{2}\underline r^{-1}I$. Therefore
$$
\mathcal P_{l,n}^{-1}\preceq\Big(\frac{1}{\underline q}+\frac{\bar N\,\bar h^{2}}{\underline r}\Big)I\triangleq\bar p\,I,\qquad \bar N\triangleq\max_{l}|\mathcal{N}_{l}|.
$$

\emph{Lower bound.} The upper bound legitimizes \cref{lem:alpha}: with $\alpha=(1+\bar p\,\bar q\,\check f^{2})^{-1}$,
$\mathcal P_{l,n|n-1}^{-1}\succeq\alpha\,\Phi_{l}(n,n{-}1)^{-\T}\mathcal P_{l,n-1}^{-1}\Phi_{l}(n,n{-}1)^{-1}$, with $\Phi_{l}$ the node's own realized transition products of \cref{def:collective}. Substituting into \eqref{eq:inforecur} and iterating over $k=n-N+1,\dots,n$, dropping the positive terminal term,
\begin{align*}
\mathcal P_{l,n}^{-1}
&\succeq\sum_{k=n-N+1}^{n}\alpha^{\,n-k}
\Phi_l(k,n)^{\T}S_{l,k}\Phi_l(k,n)\succeq\alpha^{N-1}\beta\Pi.
\end{align*}
by \eqref{eq:gramian}, since $S_{l,k}=J^{\mathrm{col}}_{l,k}$ is exactly the collective realized increment of \cref{def:collective}. Hence $\underline p\triangleq\alpha^{N-1}\beta$ for $n\geq N$; for $n<N$, the initialization $\mathcal P_{l,0}^{-1}\succ0$ and the same one-step inequalities give a positive initialization-dependent bound. Only node $l$'s own realized products appear, so no detectability of any scaled pair, and no common linearization across nodes, is invoked.

\smallskip
\noindent\emph{Scope.} The lower bound hinges on the neighbourhood term $S_{l,n}$ in \eqref{eq:inforecur}. In \cref{DCUKF1} each node assimilates only its \emph{own} measurement, $S_{l,n}=J_{l,n}\,\mathbf{1}\{\text{node }l\text{ samples at }n\}$, and the argument then requires the \emph{local} pair to be uniformly detectable at node $l$'s own sampling schedule; collective detectability alone is not sufficient (cf.\ \cref{rem:alg1caveat} and \cref{sec:sim}).

\subsection{Proof of \texorpdfstring{\cref{prop:alg2}}{Proposition 1}}
On its locally observable coordinates, \cref{DCUKF2} runs the single-node information recursion with the node's own channel: $(\mathcal P^{ob}_{l,n})^{-1}=(\mathcal P^{ob}_{l,n|n-1})^{-1}+S^{ob}_{l,n}$, with $S^{ob}_{l,n}$ the reduced own-channel information increment at the ticks where node $l$ samples. The upper bound follows as in \cref{thm:Pbounds}. The reduced realized dynamics inherit the bounds of \cref{ass:bounds} through the uniformly conditioned $\Theta_{l}$, so \cref{lem:alpha} applies on the reduced block, and the unroll of the proof of \cref{thm:Pbounds} -- with $S_{l,k}$ replaced by $S^{ob}_{l,k}$ and \eqref{eq:gramian} assumed on the reduced coordinates with $\mathcal V=\{0\}$ (uniform regularity) -- yields $\underline p_{2}\triangleq\alpha_{2}^{N_{2}-1}\beta_{2}$, with constants depending additionally on the conditioning of $\Theta_{l}$. $\hfill\blacksquare$

\subsection{Proofs of \texorpdfstring{\cref{cor:alg2err} and \cref{cor:quoterr}}{Corollaries 1 and 2}}
\emph{Proof of \cref{cor:alg2err}.}\ Under the regularity hypothesis the reduced computed recursion is a self-contained CUKF instance on the observable block of $\Theta_lF\Theta_l^{-1}$ with increments $J^{ob}$, and \cref{prop:alg2} supplies $(\underline p_2,\bar p_2)$. The reduced exact error relation carries the $\Theta_l$-restricted discrepancies (plus, for $E\neq0$, the bounded complement coupling, absorbed into $\mu$), so the Joseph/Young argument of \cref{stablemma} applies verbatim with the reduced constants, giving $\mathbb E[V^{ob}_n\mid\mathcal F_{n-1}]\leq\rho^{(2)}_{\mathrm{loc}}V^{ob}_{n-1}+(1+\tau_2^{-1})\bar\mu_2$. In $l$'s observable coordinates, Step~5 is a convex combination of neighbour contributions; under the compatibility condition each contribution is itself a filtered quantity obeying the same recursion, and the norm-equivalence factor $\kappa_2$ yields the network bound as in the diffusion branch of \cref{thm:errbound}. Without compatibility the combination includes unfiltered components -- the injection mechanism of \cref{sec:sim}. $\hfill\blacksquare$

\smallskip\noindent\emph{Proof of \cref{cor:quoterr}.}\ Exact invisibility gives $U^{\T}F_n=F^{q}_nU^{\T}$, $H_n=H^{q}_nU^{\T}$ ($F^{q}_n\triangleq U^{\T}F_nU$, $H^{q}_n\triangleq H_nU$), whence $U^{\T}\mathcal P_{n|n-1}U$, the innovation covariance, $U^{\T}\mathcal K$, and $U^{\T}\mathcal P_nU$ obey the quotient filter's prediction, innovation, gain, and update recursions, for any cross-covariance -- (i). Since $U^{\T}(I-\mathcal KH)F=(I-U^{\T}\mathcal K H^{q})F^{q}U^{\T}$, the quotient error recursion is autonomous; scalar diffusion weights commute with $U^{\T}$, so \cref{thm:errbound}'s diffusion branch applies with the constants of (i) -- (ii). Under $W^{\T}F_nU=0$ the recursions preserve block-diagonality of $\mathcal P$, hence $M_j=\mathcal P_{\mathrm{new}}\mathcal P_j^{-1}$ is block-diagonal and the information-mean combination commutes with $U^{\T}$ -- (iii). $\hfill\blacksquare$

\subsection{Proof of \texorpdfstring{\cref{thm:Pbounds3}}{Theorem 3}}

By \cref{DCUKF3}, $\mathcal P_{l,n}^{-1}=\sum_{j\in\mathcal{N}_{l}}c_{j,l}\big[\mathcal P_{j,n|n-1}^{-1}+S_{j,n}\big]$ with $S_{j,n}\triangleq J_{j,n}\cdot\mathbf{1}\{j\text{ samples at }n\}$ the realized information increments \eqref{eq:Jinc}.

\emph{Upper bound.} Each bracket satisfies $\mathcal P_{j,n|n-1}^{-1}+S_{j,n}\preceq(\underline q^{-1}+\bar h^{2}\underline r^{-1})I$ by the argument of \cref{thm:Pbounds}; a convex combination obeys the same bound, so $\bar p\triangleq\underline q^{-1}+\bar h^{2}\underline r^{-1}$ works (the neighbourhood factor $\bar N$ is not needed here because each node assimilates only its own measurement before diffusing).

\emph{Lower bound.} With $\alpha_3=(1+\bar p_3\bar q\check f^2)^{-1}$, keeping only the self term of the prediction part and every measurement term ($c_{j,l}\geq\underline c$),
$$
\mathcal P_{l,n}^{-1}\;\succeq\;\underline c\,\mathcal P_{l,n|n-1}^{-1}+\underline c\,S^{\mathrm{col}}_{l,n},\qquad S^{\mathrm{col}}_{l,n}\triangleq\!\!\sum_{j\in\mathcal{N}_{l}(n)}\!\!S_{j,n},
$$
and \cref{lem:alpha} on node $l$'s own chain gives $\mathcal P_{l,n|n-1}^{-1}\succeq\alpha_3\,\Phi_{l}(n,n{-}1)^{-\T}\mathcal P_{l,n-1}^{-1}\Phi_{l}(n,n{-}1)^{-1}$ with the computed $F$-products. Iterating over the window of \cref{def:collective},
\begin{align*}
\mathcal P_{l,n}^{-1}&\succeq\underline c\!\!\sum_{k=n-N+1}^{n}\!\!(\alpha_3\underline c)^{\,n-k}\,\Phi_{l}(k,n)^{\T}S^{\mathrm{col}}_{l,k}\,\Phi_{l}(k,n)\succeq\underline c\,(\alpha_3\underline c)^{N-1}\beta\,\Pi,
\end{align*}
so $\underline p_{3}\triangleq\underline c\,(\alpha_3\underline c)^{N-1}\beta$ for $n\geq N$. Only node $l$'s own realized transition products appear -- no common linearization across nodes, no Loewner minimum, and no scaled-pair detectability are invoked -- and the covariance recursion, hence the bound, is identical for either mean rule in Step 3 of \cref{DCUKF3}. Finally, for the quotient consequence: from $\mathcal P^{-1}\succeq\underline p_{3}\Pi$, taking $\mathbf{x}=\mathcal P U\mathbf{z}$ in $\mathbf{x}^{\T}\mathcal P^{-1}\mathbf{x}\geq\underline p_{3}\|U^{\T}\mathbf{x}\|^{2}$ gives $\mathbf{z}^{\T}(U^{\T}\mathcal PU)\mathbf{z}\geq\underline p_{3}\,\mathbf{z}^{\T}(U^{\T}\mathcal PU)^{2}\mathbf{z}$, i.e., $U^{\T}\mathcal PU\succeq\underline p_{3}(U^{\T}\mathcal PU)^{2}$, hence $U^{\T}\mathcal PU\preceq\underline p_{3}^{-1}I$. $\hfill\blacksquare$

\subsection{Proof of \texorpdfstring{\cref{thm:errbound}}{Theorem 4}}
Use superscripts $F$ and $L$ for the fused covariance and the local posterior before fusion. The prediction at tick $n+1$ starts from $\mathcal P^F_{j,n}$. For each node, the exact computed recursion from $\mathcal P^F_{j,n}$ to $\mathcal P^L_{j,n+1}$ is the Joseph recursion \eqref{eq:Precur}, with the stage-uniform bounds $(\underline p,\bar p)$ stated in the theorem. Therefore the local argument of \cref{stablemma}, with those constants, gives
\begin{equation}\label{eq:localdec}
\mathbb E\!\left[(\widetilde{\boldsymbol\psi}_{j,n+1})^{\T}
(\mathcal P^L_{j,n+1})^{-1}\widetilde{\boldsymbol\psi}_{j,n+1}
\mid\mathcal F_n\right]\leq \rho_{\mathrm{loc}}(\widetilde{\mathbf x}_{j,n})^{\T}
(\mathcal P^F_{j,n})^{-1}\widetilde{\mathbf x}_{j,n}+\mu_0.
\end{equation}
where $\mu_0<\infty$ is uniform. The only nonlinear perturbation is
$(I-\mathcal KH)E-\mathcal KD(F+E)$, whose norm is bounded by $\bar d_*$; hence \eqref{eq:localdiscrepancy} is exactly the condition used in obtaining \eqref{eq:localdec}.

For the information mean, \cref{lem:mix} with $\Omega_j=(\mathcal P^L_{j,n+1})^{-1}$ gives
\begin{equation*}
(\widetilde{\mathbf x}_{l,n+1})^{\T}
(\mathcal P^F_{l,n+1})^{-1}\widetilde{\mathbf x}_{l,n+1} \leq\sum_{j\in\mathcal N_l}c_{j,l}
(\widetilde{\boldsymbol\psi}_{j,n+1})^{\T}
(\mathcal P^L_{j,n+1})^{-1}\widetilde{\boldsymbol\psi}_{j,n+1}.
\end{equation*}
Taking expectations first and then maximizing over nodes yields $W_{n+1}\leq \rho_{\mathrm{loc}}W_n+\mu_0$.

For the diffusion mean, convexity gives the same inequality with the additional metric-equivalence factor
$\kappa=\bar p/\underline p$, because $(\mathcal P^F_{l,n+1})^{-1}\preceq\bar pI\preceq\kappa(\mathcal P^L_{j,n+1})^{-1}$. Thus
$W_{n+1}\leq\kappa \rho_{\mathrm{loc}}W_n+\kappa\mu_0$, which contracts precisely under \eqref{eq:gaincondition}. In either case, the lower information bound converts the weighted estimate into the asserted Euclidean mean-square bound. \hfill$\blacksquare$

\subsection{Proof of \texorpdfstring{\cref{thm:hinfP}}{Theorem 5}}
Write $\Omega_{l,n}=\mathcal P_{l,n|n-1}^{-1}$ and
\begin{align*}
G_{l,n}&=\mathcal I_{l,n}
=\sum_{j\in\mathcal N_l(n)}H_{j,n}^{\T}R_{j,n}^{-1}H_{j,n},\\
M_{l,n}&=\Omega_{l,n}+G_{l,n}.
\end{align*}
These are exactly the quantities computed by \cref{UKF22}. Since $\Sigma_{j,n}\succeq0$,
$H_j^{\T}R_j^{-1}H_j\succeq J_{j,n}$, so $G_{l,n}\succeq J^{\mathrm{col}}_{l,n}$.

Assume inductively that $M_{l,n-1}\succeq\underline p_0I$. By \cref{ass:hinf},
\[
\mathcal P_{l,n-1}^{-1}=M_{l,n-1}-\theta^{-2}I\succeq\tfrac12M_{l,n-1}\succ0.
\]
The prediction is therefore well posed. Moreover, $\Omega_{l,n}\preceq\underline q^{-1}I$ and $G_{l,n}\preceq\bar d\bar h^2\underline r^{-1}I$, which gives the upper bound $M_{l,n}\preceq\bar p_HI$ (including the finite initial value by the definition of $\bar p_H$).

For the lower bound, \cref{lem:alpha} and the preceding shrink inequality give
\[
M_{l,n}\succeq\frac{\alpha_H}{2}F_{l,n-1}^{-\T}M_{l,n-1}F_{l,n-1}^{-1}+J^{\mathrm{col}}_{l,n}.
\]
Unrolling this inequality over the $N$-tick window and applying \eqref{eq:gramian} with $\Pi=I$ yields
$M_{l,n}\succeq(\alpha_H/2)^{N-1}\beta I=\underline p_0I$ for $n\geq N$; the finite-prefix requirement in \cref{ass:hinf} covers $0\leq n<N$. To verify the sufficient initialization stated there, let $m_n=\lambda_{\min}(M_{l,n})$ and use \cref{lem:alpha} before invoking the established lower margin:
\[
m_n\geq g\bigl(m_{n-1}-\theta^{-2}\bigr),\qquad g=\alpha_H/\bar f^2.
\]
For $0<g<1$, iteration gives
\[
m_n\geq g^n m_0-\theta^{-2}\frac{g(1-g^n)}{1-g}.
\]
The displayed initialization in \cref{ass:hinf} is a conservative bound that makes the right-hand side at least $\underline p_0$ for every $0\leq n<N$. Finally
$\mathcal P_{l,n}^{-1}=M_{l,n}-\theta^{-2}I\succeq(\underline p_0/2)I$, completing the proof. \hfill$\blacksquare$

\subsection{Proof of \texorpdfstring{\cref{thm:hinferr}}{Theorem 6}}
A covariance-form route through the cross term is unavailable: $R_{e}$ is indefinite by construction (a scalar case with $\mathcal P{=}0.5,R{=}2,S{=}1,\theta^{2}{=}1$ gives cross term $-2/9$ despite a valid regularized-information margin). The proof therefore works in the information metric, where the $\theta^{-2}$ subtraction has a definite sign effect.
Fix a node and suppress its index. Put $\Omega_n=\mathcal P_{n|n-1}^{-1}$, $G_n=\mathcal I_n$, $M_n=\Omega_n+G_n$, and $W_n=\mathcal P_n^{-1}=M_n-\theta^{-2}I$. By \cref{thm:hinfP},
$(\underline p_0/2)I\preceq W_n\preceq\bar p_HI$ and $M_n\preceq\bar p_HI$.

The exact error relations and \eqref{eq:crossdelta} give
\begin{align*}
\widetilde{\boldsymbol\psi}_n={}&
M_n^{-1}(\Omega_n-\Upsilon_n)(F_{n-1}+E_{n-1})
\widetilde{\mathbf x}_{n-1}+M_n^{-1}(\Omega_n-\Upsilon_n)\mathbf w_{n-1}-M_n^{-1}\boldsymbol\xi_n.
\end{align*}
where $\boldsymbol\xi_n=\sum_jH_j^{\T}R_j^{-1}\mathbf v_j$ is the aggregated measurement-noise term. Split the noise-free map into the computed baseline
$A_{0,n}=M_n^{-1}\Omega_nF_{n-1}$ and the perturbation
\[
T_{\theta,n}=M_n^{-1}\!\left[\Omega_nE_{n-1}-\Upsilon_n(F_{n-1}+E_{n-1})\right].
\]

First, \eqref{eq:Ppred} and the prior stored-information lower bound imply
\[
F_{n-1}^{\T}\Omega_nF_{n-1}\preceq(1+\delta_\theta)^{-1}\mathcal P_{n-1}^{-1}.
\]
Second,
\begin{align*}
\Omega_nM_n^{-1}W_nM_n^{-1}\Omega_n
={}&\Omega_nM_n^{-1}\Omega_n-\theta^{-2}\Omega_nM_n^{-2}\Omega_n\preceq{}(1-\theta^{-2}\gamma_\theta)\Omega_n.
\end{align*}
Indeed, \cref{lem:alpha} and the prior stored-information lower bound give $\lambda_{\min}(\Omega_n)\geq\underline\omega$, while $\lambda_{\max}(M_n)\leq\bar p_H$. Consequently
\[
A_{0,n}^{\T}W_nA_{0,n}\preceq r_{\theta,0}\mathcal P_{n-1}^{-1}.
\]
Moreover, $\|M_n^{-1}\|\leq\underline p_0^{-1}$, $\|\Omega_n\|\leq\underline q^{-1}$, and hence $\|T_{\theta,n}\|\leq\bar d_\theta$. Therefore
\begin{equation*}
(T_{\theta,n}\mathbf x)^{\T}W_n(T_{\theta,n}\mathbf x) \leq\frac{2\bar p_H}{\underline p_0}\bar d_\theta^2\,
\mathbf x^{\T}\mathcal P_{n-1}^{-1}\mathbf x.
\end{equation*}
Cauchy--Schwarz in the $W_n$ metric and \eqref{eq:hinfdiscrepancy} show that the full noise-free map contracts by at most $r_\theta$.

Let
\[
\eta_{\theta,n}=M_n^{-1}(\Omega_n-\Upsilon_n)\mathbf w_{n-1}-M_n^{-1}\boldsymbol\xi_n.
\]
Its coefficients may depend on the contemporaneous nonlinear trajectory, so no conditional-centering argument is used. The bounds $\|M_n^{-1}(\Omega_n-\Upsilon_n)\|\leq\underline p_0^{-1}(\underline q^{-1}+\bar d_\Upsilon)$ and $W_n\preceq\bar p_HI$, together with the uniformly bounded noise covariances, give a finite constant $\bar\mu_\theta$ such that
$\mathbb E[\eta_{\theta,n}^{\T}W_n\eta_{\theta,n}\mid\mathcal F_{n-1}]\leq\bar\mu_\theta$.
Applying Young's inequality in the $W_n$ metric with $\tau_\theta=(1-r_\theta)/(2r_\theta)$ yields
\begin{equation*}
\mathbb E\!\left[(\widetilde{\boldsymbol\psi}_n)^{\T}W_n
\widetilde{\boldsymbol\psi}_n\mid\mathcal F_{n-1}\right] \leq \rho_\theta(\widetilde{\mathbf x}_{n-1})^{\T}
\mathcal P_{n-1}^{-1}\widetilde{\mathbf x}_{n-1}
+(1+\tau_\theta^{-1})\bar\mu_\theta,
\end{equation*}
where $\rho_\theta=(1+r_\theta)/2<1$. The diffusion mean introduces the factor $\kappa_\theta=2\bar p_H/\underline p_0$, giving a contracting network recursion under \eqref{eq:hinfgain}. The lower bound on the stored information then converts the weighted bound to the asserted Euclidean mean-square bound. \hfill$\blacksquare$

\endgroup
\renewcommand{\bibfont}{\scriptsize}
\setlength{\bibsep}{1pt}
\bibliographystyle{elsarticle-harv}
\bibliography{ref_theory_fixed}
\end{document}